  \providecommand\BibTeX{{%
    \normalfont B\kern-0.5em{\scshape i\kern-0.25em b}\kern-0.8em\TeX}}}
\newcommand{\R}{\mathbb{R}}
\def\cLo{\mathcal{L}^0}
\def\cLi{\mathcal{L}^1}
\def\cN{\mathcal{N}}
\def\sp{\mathcal{P}_{i,j}^\text{min}}
\begin{document}

\title{Transmission Expansion Planning Using Cycle Flows}

\author{Fabian Neumann}
\email{fabian.neumann@kit.edu}
\orcid{0000-0001-8551-1480}
\author{Tom Brown}
\email{tom.brown@kit.edu}
\orcid{0000-0001-5898-1911}
\affiliation{%
  \institution{Institute for Automation and Applied Informatics,\\Karlsruhe Institute of Technology (KIT)}
  \streetaddress{Hermann-von-Helmholtz-Platz 1}
  \city{Eggenstein-Leopoldshafen}
  \country{Germany}
  \postcode{76344}
}


\begin{abstract}
  The common linear optimal power flow (LOPF) formulation that underlies most
  transmission expansion planning (TEP) formulations
  uses bus voltage angles as auxiliary optimization variables
  to describe Kirchhoff's voltage law.
  As well as introducing a large number of auxiliary variables, the angle-based formulation
  has the disadvantage that it is not well-suited to considering the connection of multiple
  disconnected networks,
  It is, however, possible to circumvent these auxiliary variables and reduce
  the required number of constraints by expressing Kirchhoff's voltage law
  directly in terms of the power flows, based on a cycle decomposition of the network graph.
  In computationally challenging benchmarks such as generation capacity expansion with multi-period LOPF,
  this equivalent reformulation was shown in previous work to
  reduce solving times for LOPF problems by an order of magnitude.
  Allowing line capacity to be co-optimized in a discrete TEP problem
  makes it a non-convex mixed-integer problem.
  This paper develops a novel cycle-based reformulation for the TEP
  problem with LOPF and compares it to the standard angle-based formulation.
  The combinatorics of the connection of multiple disconnected networks is formalized for both formulations, a topic
  which has not received attention in the literature.
  The cycle-based formulation is shown to conveniently accommodate synchronization options.
  Since both formulations use the big-$M$ disjunctive relaxation, useful derivations
  for suitable big-$M$ values are provided.
  The competing formulations are benchmarked on a realistic generation and transmission expansion model
  of the European transmission system at varying spatial and temporal resolutions.
  The cycle-based formulation solves up to 31 times faster for particular cases, while averaging at a speed-up of factor 4.
\end{abstract}

\keywords{transmission expansion planning, power system planning, graph theory, cycle basis, big-$M$ disjunctive relaxation}


\maketitle

\section{Introduction}

Rising shares of renewable energy have put transmission grids under
strain in recent years. The connection of wind turbines to the grid
far from demand has led to frequent situations of high network loading
in countries such as Denmark, Germany and the United Kingdom,
resulting in high levels of wind curtailment. Grid planners must
consider where to reinforce the network in a way that reduces overall
system costs, while also taking account of landscape and environmental
impacts \cite{lumbreras_challenges}.

Transmission Expansion Planning (TEP) is the process of optimizing the
addition of new transmission lines to an existing network. Large
shares of weather-dependent renewables mean that investments need to
be optimized over many representative weather and load conditions,
which drives up the computational burden of TEP in the presence of
renewables.

A common approach to TEP in the literature is to linearize the power
flow equations, which allows TEP problems to be written as mixed
integer linear problems (MILPs) and solved in reasonable time using
decompositions methods and specialized commercial solvers
\cite{binato_benders,gtep,romero,lumbreras_faster}. Such
approaches introduce auxiliary variables for the voltage angles to
formulate the linearized power flows. The use of voltage angles has
two major drawbacks: it introduces many new variables and constraints,
which can lead to performance problems, and it is difficult to
consider the connection of multiple disconnected networks. The latter
difficulty is due to the fact that the voltage angles are only defined
up to a constant in each connected network, and this redundancy must
be managed with care when changing the connectivity. The connection of
previously-disconnected networks is relevant for the connection of
island systems and regions with multiple synchronous zones, like Europe,
North America, China and Japan.

An alternative formulation of the linearized power flow equations has
recently been used for linear optimal power flow (LOPF) problems
without TEP that uses constraints imposed directly on the power flows
themselves, without the use of auxiliary variables, using a cycle
decomposition of the flow pattern \cite{cycleflows}. This cycle-based
formulation was shown to reduce computation times by an order of
magnitude compared to the angle-based formulation in LOPF problems
with generation capacity expansion.

The cycle-based formulation has previously been applied to
the optimal transmission switching (OTS) problem which is related to the TEP problem \cite{transmission_switching}.
OTS is an operational problem where the network topology can be changed by switching lines on and off.
In many regards OTS could be viewed as reverse TEP.
However, using a cycle-based formulation in TEP has a distinct advantage over using it in OTS:
while OTS needs to consider all simple cycles \cite{transmission_switching},
TEP can be formulated by supplementing the initial cycle basis with new candidate cycles
because existing lines are not removed.

In this paper the cycle-based formulation is extended to TEP problems.
It is shown how to choose the big-$M$ parameters necessary for the
disjunctive relaxation, which is also present in the angle-based formulation.
This is important because previous studies have reported a large impact of
big-$M$ coefficients on computation times \cite{lumbreras_realsized}.
For both formulations, it is shown how to formulate problems where
multiple disconnected networks (also called synchronous zones) may be
connected, which involves managing the choice of big-$M$ parameters
and, in the case of the angle-based formulation, the relaxation of the
slack voltage angle constraints. The connection of networks is found to
be both easier to formulate and faster to solve for the cycle-based
formulation.

Realistic benchmark cases with varying spatial and temporal resolution
are provided using the open model dataset PyPSA-Eur \cite{pypsaeur,spatial_scale}.
The model covers the European transmission system and includes regionally resolved
time series for renewable generator availability and is used to formulate
a coordinated expansion planning problem of generation and transmission infrastructure.
All formulations have been implemented for the power system analysis toolbox PyPSA \cite{pypsa}.

The remainder of the paper is structured as follows.
Section \ref{sec:lopf} guides through the foundations of angle-based and cycle-based linear power flow constraints, which are subsequently adapted to the TEP problem in Section \ref{sec:tep}.
The competing TEP formulations are benchmarked and assessed in Sections \ref{sec:setup} and \ref{sec:results}, before conclusions are drawn in Section \ref{sec:conclusion}.

\section{Linear Optimal Power Flow Formulations}
\label{sec:lopf}

Linear optimal power flow (LOPF) problems typically optimize the
dispatch of generation assets in a network with the objective to
minimize costs at the same time as enforcing the physical flow of
power using the linear approximation of the power flow equations.
More general problems consider multiple time periods, so that storage
assets can be optimized as well as investments in assets taking into
account representative load and weather situations.

This contribution considers
long-term investment planning problems that seek to
find cost-effective solutions to reduce greenhouse gas emissions in the power system, of which LOPF is
a one principal building block.
The objective is to minimize the total annual system costs of the network, comprising annualized\footnote{The annuity factor $\left(1-(1+r)^{-n}\right)r^{-1}
	$ converts the overnight investment of an asset to annual payments considering its
	lifetime $n$ and cost of capital $r$.} capital costs $c_*$ for
capacity expansion of generators $G_{i,s}$ and storage $H_{i,s}$  at nodes $i$, and transmission infrastructure $F_{\ell,s}$
at edges $\ell$ of technology $s$, as well as the variable operating costs $o_*$ for the generator dispatch $g_{i,s,t}$
\begin{align}
	\min_{G,H,F,g} \; f(G,H,F,g) \;=\; \min_{G,H,F,g} \quad \left[\sum_{i,s} c_{i,s}G_{i,s} +  \right. \nonumber \\
	\left. \sum_{i,s} c_{i,s}H_{i,s} + \sum_{\ell,s}c_{\ell,s}F_{\ell,s}+\sum_{i,s,t}w_t o_{i,s} g_{i,s,t} \right],
	\label{eq:objective}
\end{align}
where representative snapshots $t$ are weighted by $w_t$ such that their total duration
accumulates to one year \cite{pypsa,pypsaeur}.

The cost-minimizing objective is subject to a set of linear constraints that define limits on
(i) the capacities of generation, storage and transmission infrastructure from geographical and technical potentials,
(ii) the availability of variable renewable energy sources for each location and point in time derived from re-analysis weather data,
(iii) the budget of greenhouse-gas emissions,
(iv) storage consistency equations, and
(v) a multi-period LOPF formulation which, among others, constrains
the absolute active power flows $f_\ell^0$ in all existing lines $\ell\in \cLo$ to remain within their nominal capacities $F_{\ell}^0$
\begin{equation}
	|f_{\ell}^0| \leq F_{\ell}^0.
	\label{eq:existinglinelimit}
\end{equation}
The label $0$ indicates lines in the existing network.

Kirchhoff's Current Law (KCL) and Kirchhoff's Voltage Law (KVL) govern the flow $f_\ell^0$.
A variety of mathematically equivalent LOPF formulations exists,
many of which were compared and benchmarked in previous work \cite{cycleflows}.
In continuous linear capacity expansion problems without discrete transmission expansion planning
the choice of the LOPF formulation was shown to have a great impact on computation times.

In preparation for their extension to discrete transmission expansion planning in subsequent Section \ref{sec:tep},
this section reviews two LOPF formulations used in this setting.
These are (i) the common angle-based formulation using voltage angles as auxiliary variables (cf.\ Section \ref{sec:ab-kvl}) and
(ii) a more efficient cycle-based formulation deduced from graph-theoretical considerations (cf.\ Section \ref{sec:cb-kvl}).
Both formulations share the constraints for representing KCL (cf.\ Section \ref{sec:kcl}), but differ in their formulation of KVL.
While the former has previously been widely used in TEP studies \cite{binato_benders,gtep,romero}, the application of the latter is a novel contribution of this paper.

\begin{table}
	\def\arraystretch{1.1}
	\begin{tabular}{@{}ll@{}}
		\toprule
		\textbf{Symbol}                 & \textbf{Description}                                           \\
		\midrule
		$\cN$                           & set of buses                                                   \\
		$\cN_0$                         & set of slack buses (reference buses)                           \\
		$\cLo$                          & set of existing lines                                          \\
		$\cLi$                          & set of candidate lines                                         \\
		$\cLi_{\text{intra}}$           & set of candidate lines within synchronous zone                 \\
		$\cLi_{\text{inter}}$           & set of candidate lines across synchronous zones                \\
		$\cLi_i$                        & set of candidate lines relaxing slack $\theta_i\;|\;i\in\cN_0$ \\
		$\mathcal{S}$                   & set of synchronous zones                                       \\
		\midrule
		$K_{i\ell}$                     & incidence matrix for lines $\ell$ at buses $i$                 \\
		$B_{\ell\ell}$                  & diagonal susceptance matrix of lines $\ell$                    \\
		$L_{ij}$                        & weighted Laplacian ($L=KBK^\top$)                              \\
		$C_{\ell c}^0$                  & cycles basis matrix of existing network                        \\
		$C_{\ell c}^1$                  & candidate cycle matrix                                         \\
		\midrule
		$p_i$                           & power injection at node $i$                                    \\
		$f_\ell^{0/1}$                  & power flow in existing/candidate line $\ell$                   \\
		$\theta_\ell=\theta_i-\theta_j$ & voltage angle difference between buses $i$ and $j$             \\
		$i_\ell$                        & binary line investment variable ($i_\ell \in \mathbb{B}$)      \\
		\midrule
		$F_\ell^{0/1}$                  & nominal capacity of existing/candidate line $\ell$             \\
		$x_\ell^{0/1}$                  & series reactance of existing/candidate line $\ell$             \\
		$\sp$                           & shortest path between buses $i$ and $j$                        \\
		$M_\ell^\text{KVL}$             & Big-$M$ parameter for angle-based power flow                   \\
		$M_c^\text{KVL}$                & Big-$M$ parameter for cycle-based power flow                   \\
		$M_\ell^\text{slack}$           & Big-$M$ parameter for slack constraints                        \\
		\bottomrule
	\end{tabular}
	\caption{Nomenclature}
\end{table}

\subsection{Kirchhoff's Current Law (KCL)}
\label{sec:kcl}

Kirchhoff's Current Law (KCL) requires the power injected at each bus to
equal the power withdrawn by attached lines; i.e.
\begin{equation}
	p_i = \sum_\ell K_{i\ell} f_\ell^0 \qquad \forall i \in \cN
	\label{eq:kcl}
\end{equation}
where $p_i$ is the active power injected or consumed at node $i\in \cN$,
$f_\ell^0$ is the active power flow on line $\ell$, and
$K\in\R^{|\cN|\times |\cLo|}$ is the incidence matrix of the network graph
which has non-zero values $+1$ if line $\ell$ starts at bus $i$ and $-1$ if line $\ell$ ends at bus $i$.
The orientation of lines is arbitrary but fixed \cite{dual_line_outages}.

KCL provides $|\cN|$ linear equations for the $|\cLo|$ unknown flows $f_\ell^0$, of which one is linearly dependent \cite{cycleflows}.
If the network is a tree with $|\cLo|=|\cN|-1$, equation \eqref{eq:kcl} is already sufficient to uniquely determine the flows $f_\ell^0$.
However, in meshed networks $|\cLo|-|\cN|+1$ additional independent equations are required.
These are provided by Kirchhoff's Voltage Law (KVL).

\subsection{Angle-based Kirchhoff's Voltage Law (KVL)}
\label{sec:ab-kvl}

In textbooks and software toolboxes, Kirchhoff's Voltage Law (KVL) for the linearized power flow is commonly formulated in terms of the voltage phase angles $\{\theta_i\}_{i\in\cN}$ \cite{grainer,convex_optimization}.
This angle-based formulation originates directly from applying the assumptions for linearized power flow to the nonlinear power flow equations in voltage-polar coordinates of lines $\ell\in\cLo$
\begin{align}
	f_\ell^0 = p_\ell & = g_\ell |V_i|^2 - |V_i||V_j| (g_\ell\cos(\theta_i-\theta_j)-b_\ell\sin(\theta_i-\theta_j))  \\
	q_\ell            & = b_\ell |V_i|^2 - |V_i||V_j| (g_\ell\sin(\theta_i-\theta_j)-b_\ell\cos(\theta_i-\theta_j)).
\end{align}
Assuming (i) all voltage magnitudes $|V_i|$ are close to one per unit,
(ii) conductances $g_\ell$ are negligible relative to the susceptances $b_\ell$,
(iii) voltage angle differences are small enough such that $\sin(\theta_i-\theta_j)\approx\theta_i-\theta_j$, and
(iv) reactive power flows $q_\ell$ are negligible compared to real power flows $p_\ell$ leads to
\begin{equation}
	f_\ell^0 = \frac{\theta_\ell}{x_\ell^0}=\frac{1}{x_\ell^0} \sum_i K_{i\ell} \theta_i \qquad \forall \ell \in \cLo
	\label{eq:angle-kvl}
\end{equation}
where $x_\ell^0=b_\ell^{-1}$ is the line reactance and $\theta_\ell=\theta_i - \theta_j$ is the voltage angle difference
between nodes $i$ and $j$ which line $\ell$ connects \cite{convex_optimization}.

Additionally, a reference voltage angle is commonly set at one bus for each synchronous zone
\begin{equation}
	\theta_i = 0 \qquad \forall i \in \cN_0
	\label{eq:slack}
\end{equation}
where $\cN_0$ denotes the set of slack buses.
This circumvents the rotational degeneracy\footnote{The term degeneracy is used to describe the condition where different values for optimization variables yield same optimal objective value. Degeneracy is known to have a detrimental impact on the convergence of both simplex and interior-point methods.}
that originates from the invariance of the network flows to adding a constant to all voltage angles $\theta_i \rightarrow \theta_i + c$ \cite{molzahn_survey}.
Together with the KCL constraints, the angle-based formulation provides $|\cLo|+|\cN|$ independent equality constraints to determine the $|\cLo|$ flows and $|\cN|$ angles.

\subsection{Cycle-based Kirchhoff's Voltage Law (KVL)}
\label{sec:cb-kvl}

KVL states that the sum of voltage angle differences across lines around all cycles in the network must sum to zero.
This allows a reformulation of the linearized power flow equations which circumvents the auxiliary voltage angle variables.
The consistency of voltage angle summations within a connected network can alternatively be achieved by using a cycle basis of the network graph $\mathcal{G}=(\cN, \cLo)$.
A cycle basis is a subset of all simple cycles of $\mathcal{G}$ such that any other cycle can be described
by a linear combination of cycles in the cycle basis \cite{biggs_graph_theory,kavitha_cycle_bases}.
It can be constructed from a minimum spanning tree $\mathcal{T}$ of the network graph in $\tilde O(|\cN|\cdot|\cLo|^2)$ \cite{minimum_cycle_basis}.
The tree $\mathcal{T}$ has $|\cN|-1$ edges \cite{bollobas_graph_theory}.
Together with the path in $\mathcal{T}$ connecting their nodes, each of the $|\cLo|-|\cN|+1$ remaining edges of $\mathcal{G}$ creates a cycle of the cycle basis.
These cycles are linearly independent because each cycle contains an edge that is not contained in the other cycles and
consequently constitute a basis of the cycle space of $\mathcal{G}$ \cite{dual_ptdf}.
These are sufficient to express KVL and uniquely determine the flows $f_\ell^0$ \cite{manik_cycleflows}.
The independent cycles $c\in\{1,\dots,|\cLo|-|\cN|+1\}$ are expressed as a directed linear combination of the lines $\ell$ in the cycle incidence matrix
\begin{equation}
	C_{\ell c}^0 = \left\{
	\begin{array}{r l}
		1   & \; \mbox{if edge $\ell$ is element of cycle $c$},          \\
		- 1 & \; \mbox{if reversed edge $\ell$ is element of cycle $c$}, \\
		0   & \; \mbox{otherwise}.
	\end{array} \right.
	\label{eqn:cycle-edge-matrix}
\end{equation}
Then KVL can be written as
\begin{equation}
	\sum_{\ell} C_{\ell c}^0 \theta_{\ell} = 0 \hspace{1cm} \forall c=1,\ldots,|\cLo|-|\cN|+1.  \label{eq:cycle}
\end{equation}
where $\theta_\ell = \theta_i - \theta_j$ is the angle difference between the two nodes $i$ and $j$ which line $\ell$ connects.
By substituting equation \eqref{eq:angle-kvl} into equation \eqref{eq:cycle}, KVL can be expressed in terms of the power flows as
\begin{equation}
	\sum_{\ell} C_{\ell c}^0 x_\ell^0 f_\ell^0 = 0 \hspace{1cm} \forall c=1,\ldots,|\cLo|-|\cN|+1. \label{eq:cycle-kvl}
\end{equation}
Consequently, while the angle-based formulation defines KCL and KVL with $|\cLo|+|\cN|$ variables and $|\cLo|+|\cN|$ independent equality constraints,
the equivalent cycle-based formulation requires only $|\cLo|$ variables and $|\cLo|$ independent equality constraints.
Besides fewer variables and constraints, the cycle-based formulation moreover features sparser constraints than the angle-based formulation.

The computational appeal of this reformulation was evaluated in \cite{cycleflows} for multi-period linear optimal power flow problems
with generator capacity expansion and has been applied in other publications \cite{carvalho_cycleflows,dual_line_outages,dual_ptdf,transmission_switching,bollobas_graph_theory}.
It has further been proven in \cite{bollobas_graph_theory} the cycle-based formulation also holds for multigraphs\footnote{Multigraphs are graphs allowing parallel edges between the same two vertices.}
which is particularly relevant for its extension to transmission expansion planning.

\subsubsection{Post-facto Calculation of Voltage Angles}
\label{sec:postfacto}

The cycle-based formulation does not include variables for the voltage angles. However, if needed, they can be calculated subsequently using optimized net nodal power injection or consumption $p_i$.
By substituting equation \eqref{eq:angle-kvl} into equation \eqref{eq:kcl} one obtains
\begin{equation}
	p_i = \sum_{\ell\in\cLo} K_{i\ell} \frac{1}{x_\ell}\sum_{j\in \cN} K_{j\ell} \theta_j \qquad \forall i \in \cN.
\end{equation}
This can be rewritten with a weighted Laplacian $L = KBK^\top$
where $B$ is a diagonal matrix with $B_{\ell\ell}=b_\ell=x_\ell^{-1}$, leading to
\begin{equation}
	p_i = \sum_j L_{ij} \theta_j \qquad \forall i \in \cN.
\end{equation}
This can be solved for $\theta_i$ with
\begin{equation}
	\theta_i = \sum_j \left(L^{-1}\right)_{ij} p_j \qquad \forall i \in \cN.
	\label{eq:inverted}
\end{equation}
However, $L$ is not invertible as it has a zero eigenvalue with eigenvector $\bm{1}$.
Since equation \eqref{eq:slack} provides a reference voltage angle for one bus,
the remaining voltage angles $\{\theta_i\}_{i\in\cN\setminus\cN_0}$ can be found by inverting the
submatrix $L'\in\mathbb{R}^{|\cN\setminus\cN_0|\times|\cN\setminus\cN_0|}$ of $L$ which omits the row and column corresponding to the slack bus.

\section{Transmission Expansion Planning Formulations}
\label{sec:tep}

In transmission expansion planning (TEP) we consider the discrete reinforcement
of transmission lines based on a set of candidate lines $\cLi$.
The label $1$ indicates candidate lines.
We extend the optimization problem from Section \ref{sec:lopf} by
introducing a binary investment variable $i_\ell\in \mathbb{B}$
for each candidate line $\ell \in \cLi$ and then formulate constraints
on the power flow $f_\ell^1$ depending on the investment decision.

If the candidate line $\ell$ is not built, the power flow must be zero.
Otherwise, the absolute power flow must not exceed the nominal capacity $F_\ell^1$ of the candidate line.
This is expressed by the constraint
\begin{equation}
	|f_\ell^1| \leq i_\ell F_\ell^1 \qquad \forall \ell \in \cLi.
	\label{eq:linelimit}
\end{equation}
Just like existing lines $\ell \in \cLo$ the capital cost of candidate lines  $\ell \in \cLi$ are included in the objective function in equation \eqref{eq:objective}
and nodal balance constraints in equation \eqref{eq:kcl} defining Kirchhoff's Current Law (KCL).

To define Kirchhoff's Voltage Law (KVL) for candidate lines, both angle-based and cycle-based KVL constraints, given in equations \eqref{eq:angle-kvl} and \eqref{eq:cycle-kvl},
need to be edited such that they can only be active if the associated candidate lines are built.
To achieve this, both formulations make use of the big-$M$ disjunctive relaxation.
These modifications are subsequently developed in Section \ref{sec:angle-tep} for the angle-based and Section \ref{sec:cycle-tep} for the cycle-based KVL constraints.

\subsection{Angle-based Transmission Expansion Planning}
\label{sec:angle-tep}

The angle-based KVL constraint of the TEP problem is widely known from \cite{binato_benders,convex_optimization,gtep,romero}.
It transforms the KVL equality constraint from equation \eqref{eq:angle-kvl} into the two inequalities
\begin{align}
	f_\ell^1 - \frac{\theta_\ell}{x_\ell^1}
	 & \geq - M_\ell^{\text{KVL}} (1 - i_\ell)  \nonumber                     \\
	 & \leq + M_\ell^{\text{KVL}} (1 - i_\ell)  \qquad \forall \ell \in \cLi,
	\label{eq:angle-kvl-tep}
\end{align}
where, as previously, $\theta_\ell=\theta_i-\theta_j$.
If the big-$M$ parameters $M_\ell^{\text{KVL}}$ are suitably chosen, the inequalities in equation \eqref{eq:angle-kvl-tep} are inactive
if $i_\ell=0$, but together form the original equality constraint if $i_\ell=1$.

However, big-$M$ parameters are known to easily incur numerical challenges \cite{binato_benders,lumbreras_realsized}.
Therefore, $M_\ell^{\text{KVL}}$ are ideally chosen as large as necessary, to guarantee that the KVL constraint is inactive
whenever the candidate line is not built, and as small as possible, to avoid a detrimental large value range in the constraint matrix.

For the derivation of appropriate values it is necessary to distinguish candidate lines which connect buses within the same
synchronous zone ($\cLi_{\mathrm{intra}} \subseteq \cLi$, Section \ref{sec:angle-bigm-intra}) and candidate lines which connect
multiple synchronous zones ($\cLi_{\mathrm{inter}} \subseteq \cLi$, Section \ref{sec:angle-bigm-inter}).

\subsubsection{Big-$M$ Parameters for KVL Constraints Within Synchronous Zone}
\label{sec:angle-bigm-intra}

\begin{figure}
	Example A.1 \hspace{0.5cm} Example A.2 \hspace{0.9cm} Example A.3 \\
	\centering
	\includegraphics[width=0.2\columnwidth]{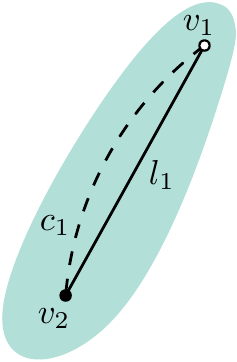}
	\includegraphics[width=0.33\columnwidth]{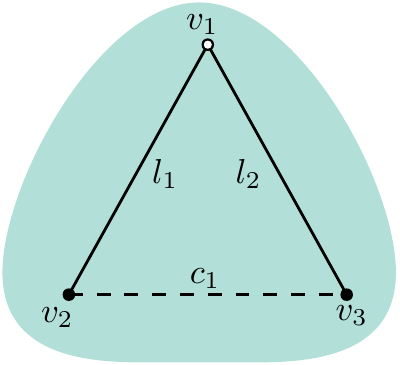}
	\includegraphics[width=0.33\columnwidth]{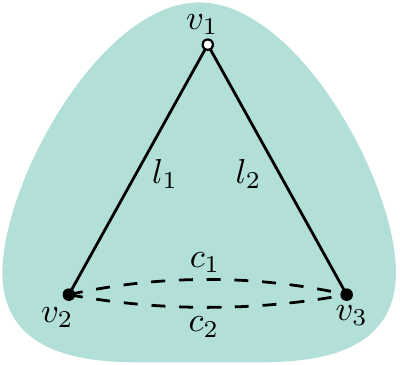}
	\caption{Example Group A. Candidate lines within a synchronous zone. Candidate lines denoted by $c_i$ and existing lines by $l_i$.}
	\label{fig:example-a}
\end{figure}

The derivation of minimal values for $M_\ell^{\text{KVL}}$ for candidate lines $\ell \in \cLi_{\mathrm{intra}}$
which connect buses of the same synchronous zone largely follows \cite{binato_benders, binato_phd},
but is reproduced here to facilitate a comparison with the novel cycle-based formulation and to set the notation.
\begin{theorem}
	The value of the disjunctive constant $M_{\ell}^\text{KVL}$ for a candidate line $\ell$ that connects two buses $i$ and $j$ of the same synchronous zone can be chosen following
	\begin{equation}
		M_{\ell}^\text{KVL} \geq \frac{|\sp|}{x_\ell^1}
	\end{equation}
	where $|\sp|$ is the length of the shortest path between the buses $i$ and $j$ along edges $k$ of the existing network graph $\mathcal{G}=(\cN, \cLo)$ with weights $F_{k}^0x_{k}^0$.
\end{theorem}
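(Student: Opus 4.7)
The plan is to show that the stated lower bound on $M_\ell^{\text{KVL}}$ is large enough to render the disjunctive KVL inequalities \eqref{eq:angle-kvl-tep} slack whenever $i_\ell = 0$, which is exactly what is required for a valid big-$M$ reformulation. First I would observe that if the candidate line is not built ($i_\ell = 0$), the capacity constraint \eqref{eq:linelimit} already forces $f_\ell^1 = 0$, so the two inequalities in \eqref{eq:angle-kvl-tep} collapse to the single requirement
\begin{equation*}
    \frac{|\theta_i - \theta_j|}{x_\ell^1} \;\leq\; M_\ell^{\text{KVL}}.
\end{equation*}
Thus it suffices to derive an \emph{a priori} upper bound on $|\theta_i - \theta_j|$ that holds for every feasible dispatch of the existing network, and then to divide by $x_\ell^1$.

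Next I would exploit the fact that $i$ and $j$ lie in the same synchronous zone, so they are connected within $\mathcal{G} = (\cN, \cLo)$ by at least one path $\mathcal{P}_{i,j}$ through existing edges. Along any such path the angle difference telescopes into a signed sum $\theta_i - \theta_j = \sum_{k \in \mathcal{P}_{i,j}} s_k \theta_k$, where $s_k \in \{-1,+1\}$ records whether the path traverses the fixed orientation of edge $k$ or its reverse. Because existing lines obey \eqref{eq:angle-kvl}, each term satisfies $\theta_k = x_k^0 f_k^0$, and the thermal limit \eqref{eq:existinglinelimit} bounds $|f_k^0| \leq F_k^0$. Combining the triangle inequality with these two facts gives
\begin{equation*}
    |\theta_i - \theta_j| \;\leq\; \sum_{k \in \mathcal{P}_{i,j}} x_k^0 F_k^0
\end{equation*}
for every path $\mathcal{P}_{i,j}$. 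Minimising over paths then yields the tightest such bound, which is exactly $|\sp|$ by the definition of a shortest path under the edge weights $x_k^0 F_k^0$. Dividing by $x_\ell^1$ establishes the claim.

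The argument is essentially a triangle-inequality chase, and I do not expect any conceptually hard step. The only real subtlety worth flagging is the careful bookkeeping of the orientation signs $s_k$ when expanding $\theta_i - \theta_j$ along an arbitrarily oriented path; these are absorbed harmlessly by the absolute value, but one should verify that the argument does not tacitly require the path to be oriented consistently with the fixed edge directions encoded in $K$. I would also point out that the resulting bound depends only on static parameters of the existing network, so the $M_\ell^{\text{KVL}}$ values for all candidate lines can be precomputed cheaply via a single-source shortest-path call per terminal bus on $\mathcal{G}$, which is a practical benefit in addition to being theoretically tight in the sense that it uses the best path available.
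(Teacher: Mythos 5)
Your proposal is correct and follows essentially the same route as the paper: set $f_\ell^1=0$ via the capacity constraint when $i_\ell=0$, reduce the disjunctive inequalities to a bound on $|\theta_i-\theta_j|$, telescope the angle difference along an existing path, bound each edge term by $F_k^0 x_k^0$ using the linearized flow relation and the thermal limit, and take the shortest path under those weights to get the tightest bound. Your explicit handling of the orientation signs $s_k$ is a minor tidying of a step the paper leaves implicit, but it does not change the argument.
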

\begin{proof}
	Let $\ell\in\cLi_{\mathrm{intra}}$ be a particular candidate line for which equations \eqref{eq:linelimit} and \eqref{eq:angle-kvl-tep} hold.
	In the case $i_{\ell}= 0$ it follows from equation \eqref{eq:linelimit} that $f_{\ell}^1 = 0$ and from equation \eqref{eq:angle-kvl-tep} that
	\begin{equation}
		- M_{\ell}^\text{KVL}x_{\ell}^1 \leq \theta_i - \theta_j \leq M_{\ell}^\text{KVL}x_{\ell}^1.
		\label{eq:difflimit}
	\end{equation}
	Equation \eqref{eq:difflimit} represents a limit on the voltage angle difference and the value of $M_{\ell}^\text{KVL}$
	must be chosen such that for as long as $i_{\ell}= 0$ this limit is never reached.
	Otherwise invalid limits on the angle difference are imposed.
	We must therefore derive valid big-$M$ parameters from constraints on the voltage angle difference that are already enforced through the existing network.

	If there exists a line $\ell\in \cLo$ in parallel to the candidate line
	(e.g. as in Example A.1 in Figure \ref{fig:example-a})
	we can obtain these by substituting equation \eqref{eq:angle-kvl} into equation \eqref{eq:existinglinelimit}, yielding the limits
	\begin{equation}
		- F_{\ell}^0x_{\ell}^0 \leq \theta_i - \theta_j \leq F_{\ell}^0x_{\ell}^0.
		\label{eq:parallelangleconstraint}
	\end{equation}
	By combining equations \eqref{eq:difflimit} and \eqref{eq:parallelangleconstraint} we can retrieve a minimum value for $M_{\ell}^\text{KVL}$:
	\begin{equation}
		M_{\ell}^\text{KVL} \geq \frac{F_{\ell}^0x_{\ell}^0}{x_{\ell}^1}
	\end{equation}
	Now consider the slightly more complicated case where the candidate line $\ell$ is not a duplication of an existing line (e.g. as in Example A.2 in Figure \ref{fig:example-a}).
	The theorem specifies that the buses $i$ and $j$ of $\ell$ are part of the same synchronous zone.
	Thus, there is at least one sequence $\mathcal{P}_{i,j}=\{k(i,b_1),k(b_1,b_2),\dots,k(b_n,j)\}$ of existing lines $k\in\cLo$
	along buses $\{b_m\}_{m=1,\dots,n}$ which already connects these buses.
	Hence, just as with an existing parallel line there is an existing limit on the voltage angle difference,
	only that the limit is not given by just one existing line but by a set of existing lines:
	\begin{equation}
		- \sum_{k\in\mathcal{P}_{i,j}} F_{k}^0x_{k}^0 \leq \theta_i - \theta_j \leq \sum_{k\in\mathcal{P}_{i,j}} F_{k}^0x_{k}^0
		\label{eq:maxanglediff}
	\end{equation}
	To find the tightest limit on $\theta_i - \theta_j$ we need to find the shortest path $\sp$ among all possible paths $\mathcal{P}_{i,j}$
	with weights $F_{k}^0x_{k}^0$ using e.g. the Dijkstra algorithm, which then yields
	\begin{equation}
		M_{\ell}^\text{KVL} \geq \frac{|\sp|}{x_\ell^1} = \frac{\sum_{k\in\sp}F_k^0x_k^0}{x_\ell^1}
	\end{equation}
	as specified in the theorem.
\end{proof}

\subsubsection{Big-$M$ Parameters for KVL Constraints Across Synchronous Zones}
\label{sec:angle-bigm-inter}

If the buses connected by candidate line $\ell$ are not part of the same synchronous zone and therefore no path exists to
infer an existing limit on the voltage angle difference, it is possible to fall back to a significantly larger value
\begin{equation}
	M_\ell^\text{KVL} \geq \frac{\sum_{k \in \cLo \cup \cLi} F_k x_k}{x_\ell^1}
\end{equation}
which has been proven in \cite{tsamasphyrou} to be a valid choice for any combination of
line investment decisions, under the condition that a reference angle is defined for all
synchronous zones such that $\theta_i = 0 \;\forall i\in\cN_0$ if no new lines are built.
Otherwise, due to the rotational degeneracy no relation could be established between
the nodal voltage angles of disconnected networks.

\subsubsection{Slack Constraints Across Synchronous Zones}
\label{sec:angle-bigm-slack}

If multiple synchronous zones may be connected by building new lines, the slack constraint in equation \eqref{eq:slack} must also be modified, since it applies separately in each connected network. When two networks are connected, one of the slack constraints should be relaxed.
The slack constraints cannot simply be dropped because the derivation of big-$M$ parameters for the KVL constraints
across synchronous zones (Section \ref{sec:angle-bigm-slack})
depends on a calculable maximal voltage angle difference across
synchronous zones even if they are not coupled.
Available transmission expansion studies that alleviate rotational degeneracy of voltage angles with slack constraints have not dealt with this case. In this section
a novel treatment of the connection of multiple synchronous zones is provided that handles the slack constraints by managing the combinatorics of possible relaxations that apply as networks are connected.

\begin{figure}
	\flushleft
	Example C.1 \\
	\flushright
	\includegraphics[width=0.65\columnwidth]{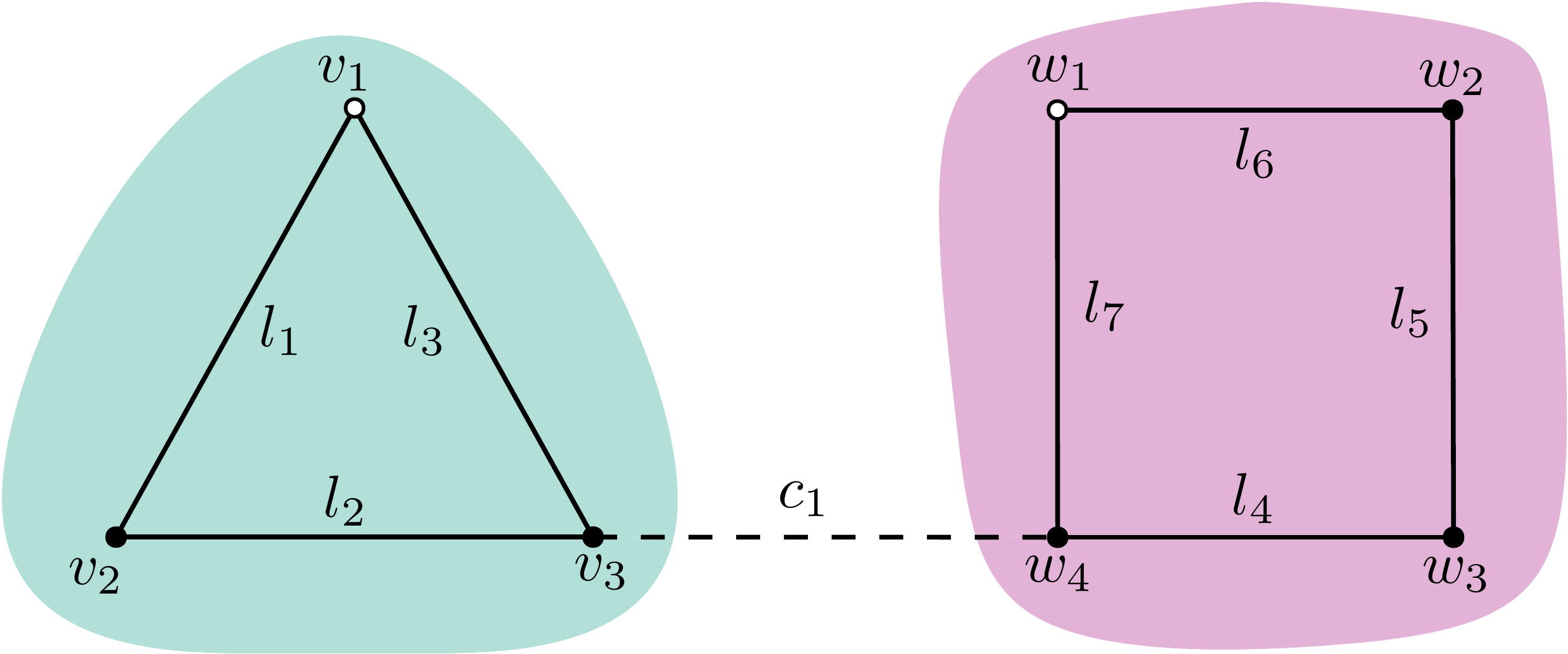}\\
	\flushleft
	Example C.2 \\
	\flushright
	\includegraphics[width=0.65\columnwidth]{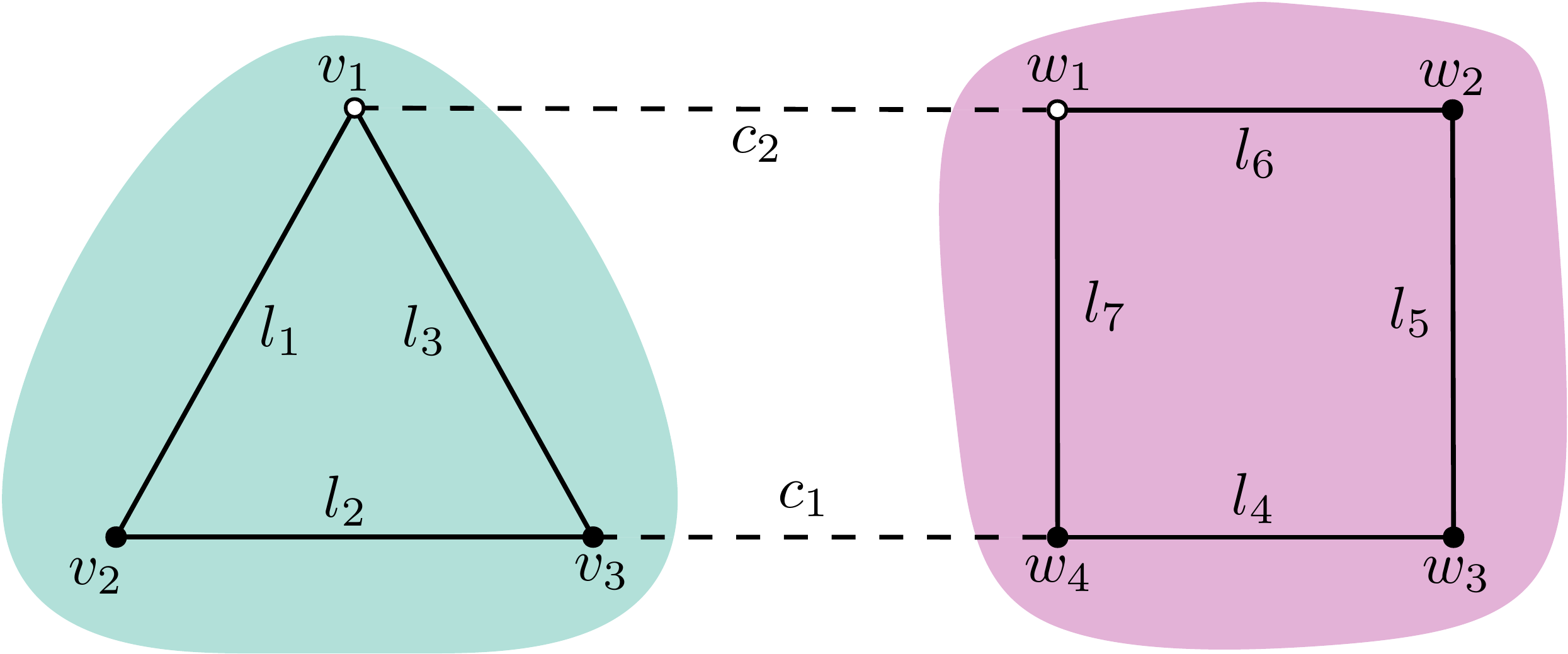}\\
	\flushleft
	Example C.3 \\
	\flushright
	\includegraphics[width=1\columnwidth]{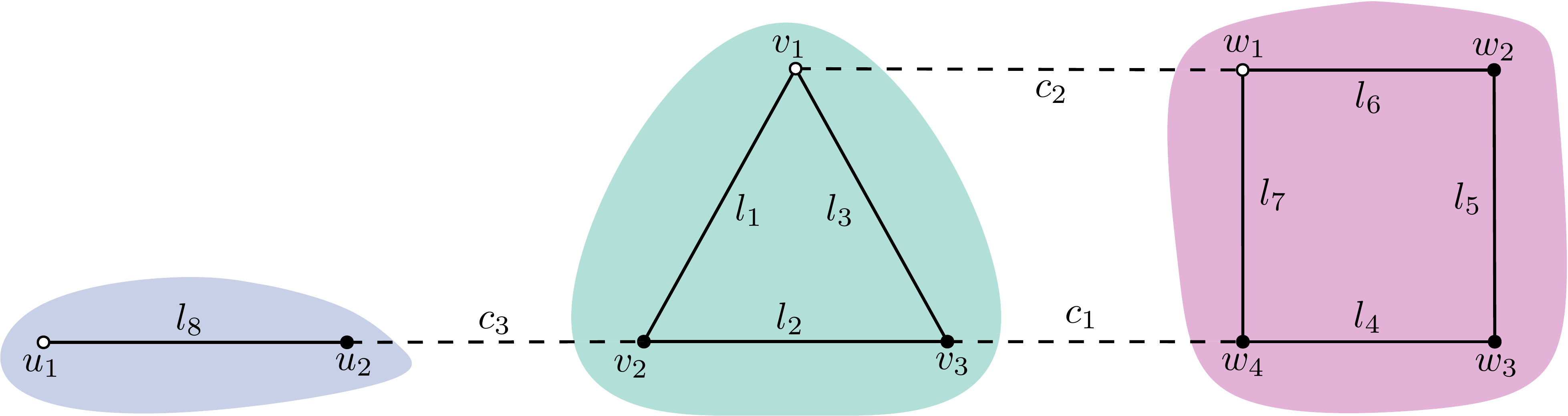}
	\caption{Example Group C. Candidate lines across synchronous zones. Candidate lines denoted by $c_i$ and existing lines denoted by $l_i$.}
	\label{fig:example-c}
\end{figure}

\begin{figure}
	remote root / depth-first \hspace{1.1cm} central root / breadth-first
	\centering
	\includegraphics[width=0.49\columnwidth]{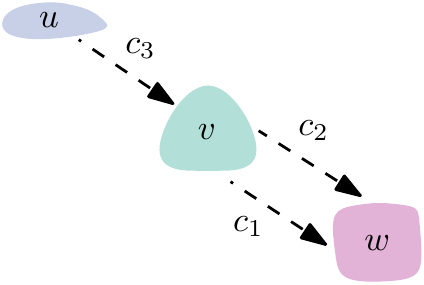}
	\includegraphics[width=0.49\columnwidth]{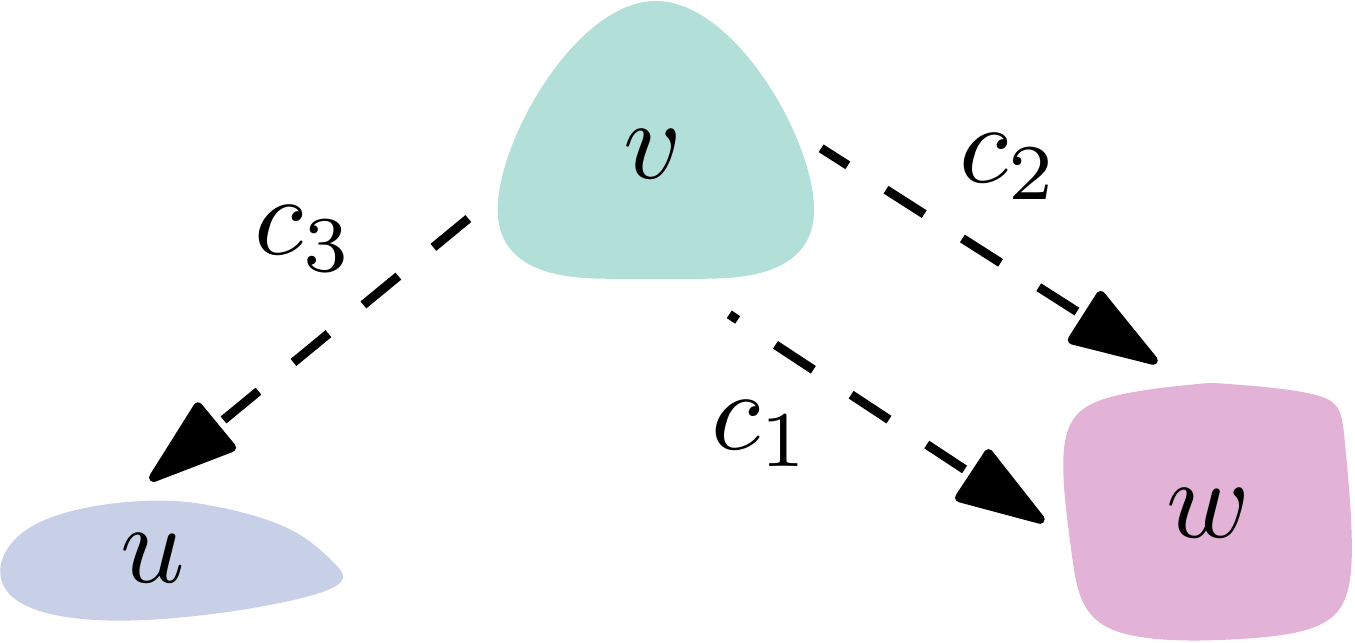}
	\caption{Example C.3 shown as different directed rooted trees of the subnetwork graph. In the depth-first variant, $u$ is the root subnetwork, $c_3$ relaxes the slack of $v$ and $c_1$ or $c_2$ relax the slack of $w$. In the breadth-first variant, $v$ is the root subnetwork, $c_3$ relaxes the slack  of  $u$ and $c_1$ or $c_2$ relax the slack of $w$. }
	\label{fig:example-c-sub}
\end{figure}

Initially, consider Example C.1 in Figure \ref{fig:example-c} where $c_1$ is a candidate line which, if built, would synchronize two synchronous zones $v$ and $w$.
If $c_1$ is built, one of the constraints in equation \eqref{eq:slack} regarding the two slack buses $v_1$ and $w_1$ must be rendered ineffective.
Otherwise the nodal voltage angles would be fixed at two buses within the same synchronous zone,
but the flow is determined by the voltage angle difference between buses. The solution would yield invalid or infeasible power flows.
Therefore, we adjust the slack constraint of $w$ to $|\theta_{w_1}| \leq i_{c_1} M_{c_1}^{\text{slack}}$, where $M_{c_1}^{\text{slack}}$ is a sufficiently large constant.

Now consider Example C.2 in Figure \ref{fig:example-c} where additionally $c_2$ is a candidate line which connects
the same two synchronous zones as $c_1$. In this case, we must agree on a single slack constraint relaxed by $c_1$
and $c_2$ as otherwise, if both are built, no slack constraint would remain to alleviate rotational degeneracy.
Hence, the slack constraint of $w$ is adjusted to $|\theta_{w_1}| \leq \sum_{\ell\in\{c_1, c_2\}} i_{\ell} M_{\ell}^{\text{slack}}$.
The sum on the right-hand side acts as a logical OR expression such that each positive investment decision $i_\ell$ alone renders the constraint non-binding.

Next, consider the slightly more complicated Example C.3 in Figure \ref{fig:example-c} where three synchronous zones may be
synchronized by candidates $c_1$, $c_2$ and $c_3$.
In this case, it is essential to select a single root synchronous zone,
the slack constraint of which is to be kept if all candidate lines are built.
For instance, not all three candidate lines can relax the slack constraint of $v$
as this would result in two remaining slack constraints in one synchronous zone.

Figure \ref{fig:example-c-sub} sketches two possible relations between the candidate lines and the slack constraints they relax
without the need to consider complementary investment decisions.
It shows reduced graphs where
the nodes $\mathcal{S}$ represent all synchronous zones and
the directed edges represent the candidate lines in $\cLi_{\text{inter}}$ and point to the synchronous zone they affect.
Since the connecting nodes are formally different than in $\mathcal{G}$ we label this edge set with $\cLi_{\mathcal{S}}$.
In the following, we refer to this graph as the subnetwork graph $\mathcal{G}_{\mathcal{S}} = (\mathcal{S}, \cLi_{\mathcal{S}})$.

Generalizing from the examples, we define sets of candidate lines $\cLi_v\subseteq\cLi_{\mathrm{inter}}$ which should turn the slack constraint of synchronous zone $v$ non-binding.
We can achieve a structure without complicating interdependencies of line investment variables if the graph of subnetworks
$\mathcal{G}_{\mathcal{S}}$ is a forest of directed trees with a defined root (but allowing parallel edges).
With an associated big-$M$ constant $M_\ell^{\text{slack}}$ that is large enough regardless of all other investment decisions
(cf. Section \ref{sec:angle-bigm-slack}), we reformulate the slack constraints to
\begin{equation}
	|\theta_v| \leq \sum_{\ell \in \cLi_v} i_\ell M_\ell^{\text{slack}}
	\label{eq:bigm-slack}
\end{equation}
which are correct for any combination of line investments.

If the subnetwork graph would not be a forest of directed rooted trees (with parallel edges),
more interdependencies would arise due to the manifold of combinations of synchronization scenarios.
Consider Example D.1 in Figure \ref{fig:example-d} where considering a dependency is inevitable.
It is viable to encode one logical AND expression for two binary investment variables $i_1$ and $i_2$ in linear programming with
an auxiliary variable $i_{12}$ and the constraint
\begin{equation}
	0\leq i_{1} + i_{2} - 2 i_{12} \leq 0
\end{equation}
\cite{logical_and}.
But the rapidly growing number of additional binary auxiliary variables and constraints that would be required for
only marginally more complicated cases, such as Example D.2, add to the appeal of reformulating the problem
without voltage angle variables in cases where multiple synchronous zones may be joined.

\subsubsection{Big-$M$ Parameters for Slack Constraints}

Having established that the subnetwork graph $\mathcal{G}_{\mathcal{S}}$ must be a forest of directed rooted trees
in order to avoid considering interdependencies of investments,
this section derives suitable big-$M$ parameters for the modified slack constraints in equation \eqref{eq:bigm-slack}.
It follows a similar logic as the preceding derivation for the KVL constraints in Section \ref{sec:angle-bigm-intra}.

For a start consider the simple case where there is only a single
candidate line $\ell$ that would connect two asynchronous zones with reference buses $v_1$ and $w_1$.
Choose, without loss of generality, that $\ell$ relaxes the slack constraint of $v_1$ ($\ell\in\cLi_{v_1}$).
Then if the candidate line built ($i_\ell=1$),
\begin{equation}
	\theta_{w_1} = 0 \qquad \text{and} \qquad |\theta_{v_1}|\leq M_\ell^\text{slack},
\end{equation}
where $M_\ell^\text{slack}$ is chosen such that the constraint is never binding.
To determine $M_\ell^\text{slack}$ we need to find the maximum absolute voltage angle $|\theta_{v_1}|$
if the candidate line is built.
This depends on the reference voltage angle $\theta_{w_1}$.
We can relate $\theta_{v_1}$ and $\theta_{w_1}$ by following a path $\mathcal{P}^\ell_{v_1,w_1}$ between the
slack buses $v_1$ and $w_1$ through the graph $\mathcal{G}^\ell=(\mathcal{N}, \cLo\cup\{\ell\})$
that consists of the existing network plus the candidate line $\ell$ via
\begin{equation}
	\theta_{v_1} - \theta_{w_1} = \sum_{ij\in \mathcal{P}_{v_1,w_1}^\ell} \theta_i - \theta_j.
\end{equation}
One can easily see this by following Example C.1 in Figure \ref{fig:example-c}.
\begin{equation}
	(\theta_{v_1}-\theta_{v_3}) + (\theta_{v_3}-\theta_{w_4}) +(\theta_{w_4}-\theta_{w_1}) =\theta_{v_1} - \theta_{w_1}.
\end{equation}
Knowing this we can calculate the maximum voltage angle difference between the two slack buses,
as previously done in equation \eqref{eq:maxanglediff} using the shortest path along lines in
$\mathcal{G^\ell}$ with weights $F_\ell x_\ell$ to determine a lower bound for $M_\ell^\text{slack}$:
\begin{equation}
	M_\ell^\text{slack} \geq \sum_{k\in \mathcal{P}^{\ell,\text{min}}_{v_1,w_1}} F_kx_k.
\end{equation}
Now consider the slightly more complicated case of $|\cLi_{v_1}|\geq 2$ candidate lines $\ell \in \cLi_{v_1}$
where either line potentially synchronizes two separate power networks with reference buses $v_1$ and $w_1$.
We can repeat the preceding calculation of
$M_\ell^\text{slack}$ for each candidate line $\ell\in \cLi_{v_1}$.
However, the maximum voltage angle difference irrespective of all investment combinations is $\max\left\{M_\ell^\text{slack}\;\vert\;l\in\cLi_{v_1}\right\}$
and should therefore be chosen for both lines.

A hierarchical strategy based subnetwork graph $\mathcal{G}_\mathcal{S}$ is applied if multiple synchronous zones can be connected.
We add the maximum big-$M$ parameter of the upstream synchronous zone to all big-$M$ parameters of the downstream synchronous zones,
starting at the root.
For instance, in Example C.3 in Figure \ref{fig:example-c-sub} using the remote root variant, the big-$M$ constant for $c_3$ would be added to those of $c_1$ and $c_2$.
This approach does not yield minimal values, as it takes a detour via the slack bus of intermediate synchronous zones,
but circumvents the need to consider investment dependencies to guarantee non-binding slack constraints.
Due to this hierarchical approach, choosing a tree via breadth-first search from a central node of the subnetwork graph $\mathcal{G}_\mathcal{S}$
is advantageous as it generally results in lower big-$M$ constants.

\subsection{Cycle-based Transmission Expansion Planning}
\label{sec:cycle-tep}

Investing in candidate lines in the transmission system can incur new cycles for which the KVL constraint in equation \eqref{eq:cycle-kvl}
must hold if and only if all candidate lines which are part of a new cycle are built.
In the following these will be referred to as candidate cycles.
Both existing and candidate lines can be involved in a candidate cycle.
Given these candidate cycles as an incidence matrix $C_{\ell c}^1$ where $\ell \in \cLo\;\cup\;\cLi$
we can formulate the KVL constraints analogously to the cycle-based load flow formulation from equation \eqref{eq:cycle-kvl}
such that it is enforced only if all candidate lines of that cycle are built:
\begin{align}
	\sum_{\ell\in\cLo\;\cup\;\cLi} C_{\ell c}^1 x_\ell f_\ell & \geq -  M_c^\text{KVL}  \left(\sum_{\ell \in \cLi} C_{\ell c}^1 (1-i_\ell)\right) \nonumber            \\
	                                                          & \leq + M_c^\text{KVL}  \left(\sum_{\ell \in \cLi} C_{\ell c}^1 (1-i_\ell)\right) \qquad  \forall \; c.
	\label{eq:cycle-tep-kvl}
\end{align}
Like in the angle-based TEP formulation (cf.\ Section \ref{sec:angle-tep}),
the cycle-based TEP formulation relies on the big-$M$ disjunctive relaxation with a
sufficiently large parameter $M_c^\text{KVL}$ for each candidate cycle $c$.
The candidate cycle matrix $C_{\ell c}^1$ on the right-hand side acts as an
indicator for whether candidate line $\ell$ is contained within the candidate cycle $c$.
Only if all those $i_\ell=1$, equation \eqref{eq:cycle-tep-kvl} becomes binding.

The cycle-based linear power flow equations have previously been applied to
the related optimal transmission switching (OTS) \cite{transmission_switching}.
However, using cycle-based power flow constraints in TEP has a distinct advantage over using it in OTS.
Since usually in TEP problems existing transmission infrastructure cannot be removed,
the KVL constraints from equation \eqref{eq:cycle-kvl} remain valid, regardless of the binary decision variables.
Conversely, OTS needs to consider all simple cycles from the start because the initial network topology, and therefore the cycle basis, may not persist \cite{transmission_switching}.
For TEP it is enough to append KVL constraints for supplemental candidate cycles according to equation \eqref{eq:cycle-tep-kvl}.

Candidate cycles can originate from
(i) a candidate line parallel to an existing line,
(ii) a candidate line connecting two buses which are already connected and are thereby part of the same synchronous zone, or
(iii) multiple candidate lines connecting two or more synchronous zones which form cycles in the subnetwork graph $\mathcal{G}_\mathcal{S}$.

\subsubsection{Candidate Cycles Within Synchronous Zone}

\begin{figure}
	\centering
	Example B.1 \hspace{1.9cm} Example B.2 \\
	\includegraphics[width=0.46\columnwidth]{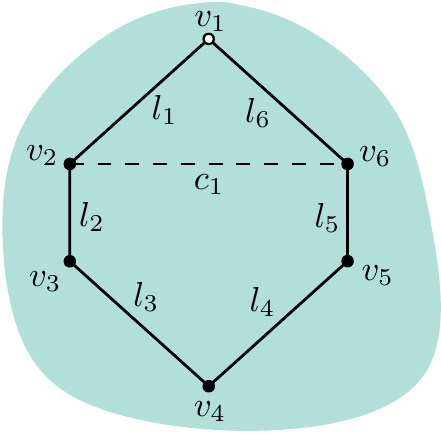}
	\includegraphics[width=0.46\columnwidth]{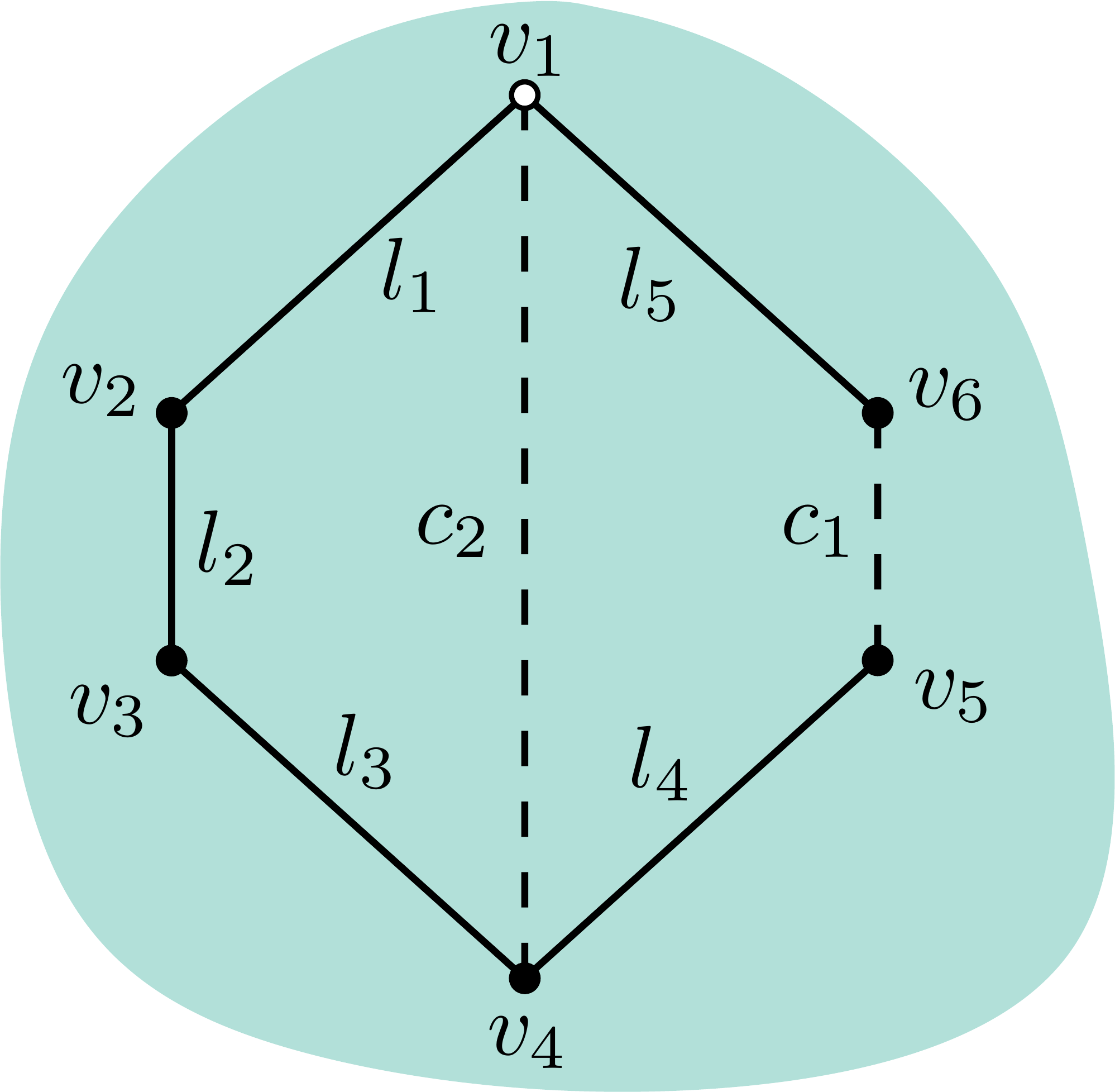}
	\caption{Example Group B. Choice of candidate cycles within synchronous zone. Candidate lines denoted by $c_i$ and existing lines denoted by $l_i$.}
	\label{fig:example-b}
\end{figure}

Finding candidate cycles within the same synchronous zone follows the subsequently described algorithm:
For each candidate line $\ell\in\cLi$ connecting buses $i$ and $j$ find a shortest path $\sp$ through the network graph $\mathcal{G}=(\cN, \cLo)$ with edge weights 1,
which includes only the existing transmission infrastructure.
The edges of the shortest path and the respective candidate line form a candidate cycle.
The cycle incidence vector is formed according to equation \ref{eqn:cycle-edge-matrix}.

While any path through $\mathcal{G}$ from $i$ to $j$ would yield a valid candidate cycle, it is computationally advantageous
to minimize the size of the cycles to obtain sparser KVL constraints.
For instance, in Example B.1 in Figure \ref{fig:example-b} the cycle for candidate line $c_1$ would consist
of $\{c_1, l_1, l_6\}$ and not $\{c_1, l_5, l_4, l_3, l_2\}$.
The potential KVL constraint would contain only three flow variables rather than five.

It is not required to add both cycles to the set of candidate cycles.
If $c_1$ gets built, already one cycle in addition to the initial cycle basis ($\{l_1, l_2, l_3, l_4, l_5, l_6\}$)
forms a cycle basis of the new network topology.

It is furthermore necessary to only consider existing lines and no other candidate lines for the shortest path search.
Otherwise a KVL constraint might be enforced only once a combination of candidate lines is built,
although building one of the candidate lines alone would already introduce a new cycle.
This is illustrated in Example B.2 in Figure \ref{fig:example-b}.
The cycles $\{c_1, l_5, c_2, l_4\}$ and $\{c2, l_1, l_2, l_3\}$ would incur incorrect KVL constraints if $c_1$ is built but not $c_2$.
On the contrary, the longer cycles $\{c2, l_1, l_2, l_3\}$ and $\{c1, l_5, l_1, l_2, l_3, l_4\}$ obtained through the cycle search
algorithm entail a correct modified cycle basis for either combination of investments.

\subsubsection{Candidate Cycles Across Synchronous Zones}

\begin{figure}
	\flushleft
	Example D.1
	\centering
	\includegraphics[width=0.9\columnwidth]{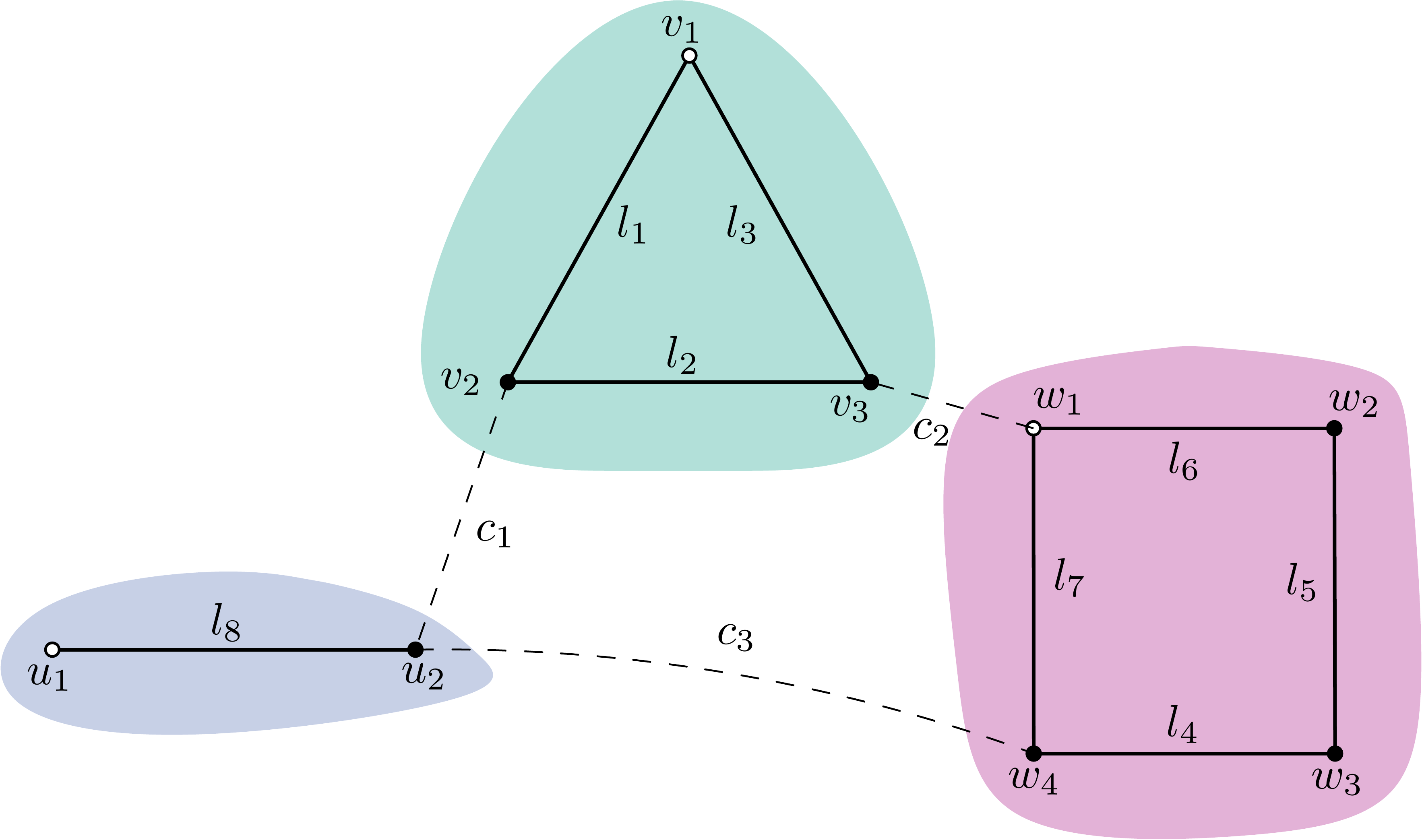}
	\flushleft
	Example D.2
	\centering
	\includegraphics[width=0.9\columnwidth]{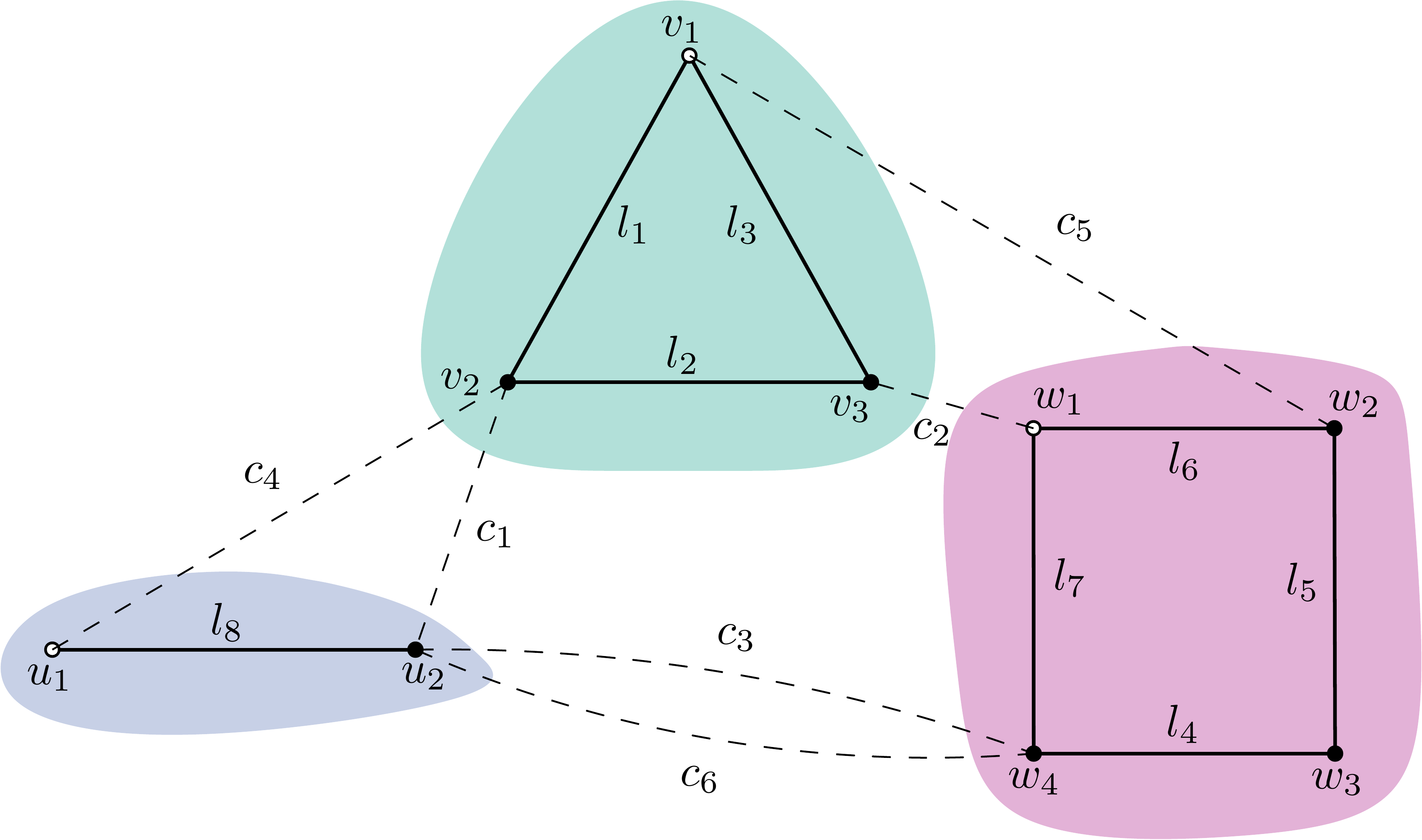}
	\caption{Example Group D. Choice of candidate cycles across synchronous zones and limits of the angle-based formulation. Candidate lines denoted by $c_i$ and existing lines denoted by $l_i$.}
	\label{fig:example-d}
\end{figure}

\begin{figure}
	Example D.1 \hspace{2.1cm} Example D.2
	\centering
	\includegraphics[width=0.49\columnwidth]{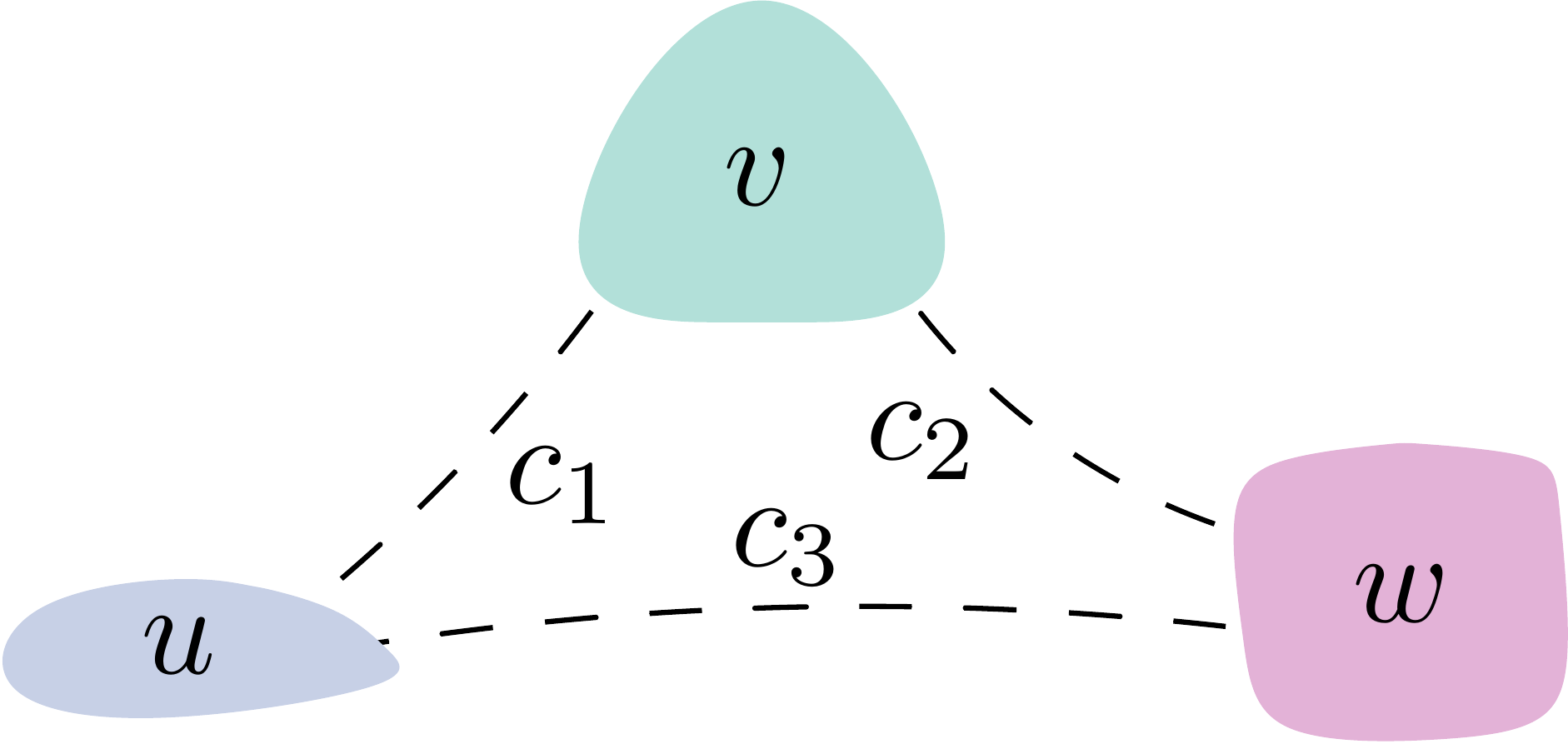}
	\includegraphics[width=0.49\columnwidth]{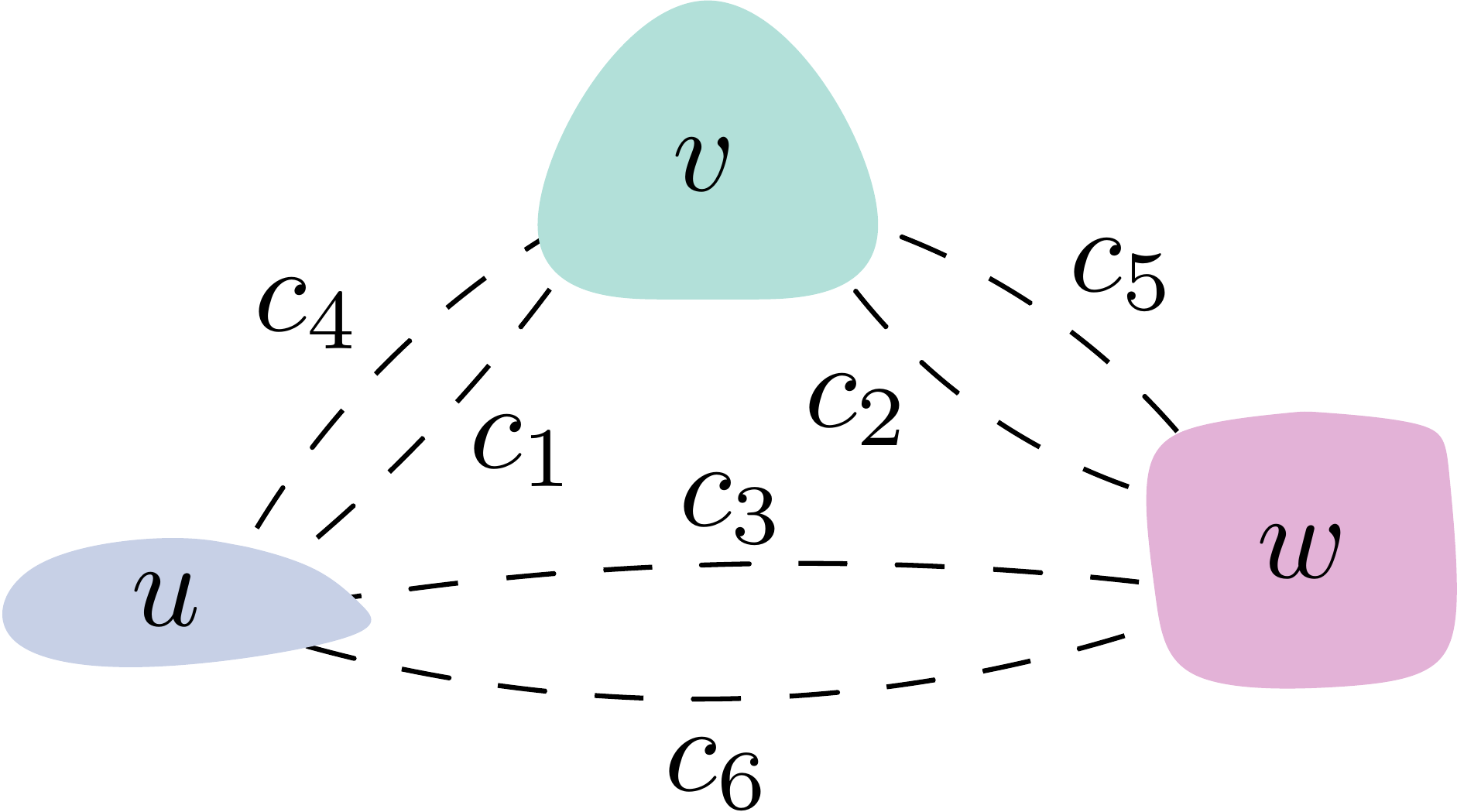}
	\caption{Example Group D as subnetwork graphs.}
	\label{fig:example-d-sub}
\end{figure}

If two synchronous zones can only be synchronized by one particular candidate line (cf. Example C.1 in Figure \ref{fig:example-c}),
no new cycle has to be added. Then KCL alone already determines the power flow.

A new cycle must be introduced if two candidate lines connect to the same two synchronous zones.
The cycle incidence vector is built from the two candidate lines and the existing lines on the
shortest paths of $\mathcal{G}$ through the synchronous zones between the connection points, where edge weights are set to $1$.
In Example C.2 in Figure \ref{fig:example-c}, $\{c_1, l_3,c_2,l_7\}$ would form the according candidate cycle.
Note, that also $\{c_1, l_2, l_3, c_2, l_6, l_5, l_4\}$ would be a correct candidate cycle, but the resulting conditional KVL constraint would be less sparse.

Additional cycles cannot only be incurred by the complementary investment of two candidate lines,
but also from multiple candidate lines connecting three or more synchronous zones
as depicted in Examples D.1 and D.2 in Figure \ref{fig:example-d}.
While Example D.1 has just one candidate cycle ($\{c_1, l_2, c_2, l_7, c_3\}$),
Example D.2 with two candidate lines per pair of synchronous zones already has 11 candidate cycles to consider (3 cycles with two edges and 8 cycles with three edges).
This is due to a growing number of interdependent combinations of investment decisions that would each demand conditional KVL constraints.
Example D.2 creates a similar situation as in the OTS problem \cite{transmission_switching},
where it becomes necessary to consider all simple cycles of the subnetwork graph $\mathcal{G}_{\mathcal{S}}$
(plus the corresponding shortest paths within the synchronous zones) a candidate cycle.
Nonetheless, the initial cycle basis of the network graph $\mathcal{G}$ still remains intact.

\subsubsection{Big-$M$ Parameters for KVL Constraints}

Having built the incidence matrix of the candidate cycles $C_{\ell c}^1$, the subsequent step is to derive an appropriate big-$M$ parameter $M_c^\text{KVL}$ for each candidate cycle.

\begin{theorem}
	The value of the disjunctive constant $M_c^\text{KVL}$ for a candidate cycle $c$ can be chosen following
	\begin{equation}
		M_c^\text{KVL} \geq \sum_{\ell\in\cLo\;\cup\;\cLi} C_{\ell c}^1 x_\ell F_\ell
		\label{eq:cycle-bigm}
	\end{equation}
\end{theorem}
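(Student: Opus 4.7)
The plan is to show that the proposed value of $M_c^\text{KVL}$ is large enough that the inequalities in \eqref{eq:cycle-tep-kvl} become inactive whenever at least one candidate line in cycle $c$ is not built, while collapsing to the KVL equality \eqref{eq:cycle-kvl} when all such candidate lines are built. The strategy mirrors the derivation for the angle-based case in Section \ref{sec:angle-bigm-intra}: bound the worst-case magnitude of the KVL expression using only constraints that are already enforced elsewhere in the problem, which here are the absolute line-flow limits \eqref{eq:existinglinelimit} and \eqref{eq:linelimit}. Unlike the angle-based setting, no shortest-path detour or distinction between intra- and inter-zone candidates is needed, since the cycle incidence vector $C_{\ell c}^1$ already localizes the expression to the edges of the candidate cycle.

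The core argument is a direct application of the triangle inequality to the left-hand side of \eqref{eq:cycle-tep-kvl}, combined with the flow bounds. Since $|f_\ell^0|\leq F_\ell^0$ for every existing line and $|f_\ell^1|\leq i_\ell F_\ell^1 \leq F_\ell^1$ for every candidate line regardless of the investment decision, one obtains
\begin{equation*}
\left|\sum_{\ell\in\cLo\cup\cLi} C_{\ell c}^1 x_\ell f_\ell\right| \;\leq\; \sum_{\ell\in\cLo\cup\cLi} |C_{\ell c}^1|\, x_\ell F_\ell,
\end{equation*}
which is the bound claimed in \eqref{eq:cycle-bigm}. It then remains to check the two regimes of the disjunction: when all candidate lines of $c$ are built, the right-hand side of \eqref{eq:cycle-tep-kvl} vanishes and the constraint is recovered as the KVL equality \eqref{eq:cycle-kvl}, which is consistent with the derived upper bound; when at least one candidate line in $c$ is missing, the multiplier contributes a factor of magnitude at least one, so the chosen $M_c^\text{KVL}$ renders the inequality slack by construction.

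The main subtlety I expect concerns the signed entries $C_{\ell c}^1 \in \{-1,0,+1\}$ appearing in the multiplier $\sum_{\ell \in \cLi} C_{\ell c}^1 (1 - i_\ell)$: two candidate lines oriented oppositely within the same cycle could in principle yield $(1 - i_\ell)$ contributions that cancel, making the multiplier vanish even when the disjunction should remain slack. I would resolve this either by orienting every candidate cycle so that its candidate-line entries share a common sign, which is always feasible when cycles are constructed by closing a shortest path around each new candidate line, or equivalently by replacing $C_{\ell c}^1$ with $|C_{\ell c}^1|$ inside the multiplier on the right-hand side of \eqref{eq:cycle-tep-kvl}. Once this orientation bookkeeping is handled, the proof reduces to a one-line triangle-inequality argument, and is appreciably simpler than the angle-based derivation.
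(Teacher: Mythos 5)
Your proposal is correct and follows essentially the same route as the paper's own proof: bound the cycle-weighted flow sum by the nominal capacities via the flow limits \eqref{eq:existinglinelimit} and \eqref{eq:linelimit}, and note that the disjunctive multiplier is at least one in magnitude whenever a candidate line of the cycle is unbuilt. Your explicit use of the triangle inequality with $|C_{\ell c}^1|$ and your remark about possible sign cancellation in the multiplier $\sum_{\ell\in\cLi} C_{\ell c}^1(1-i_\ell)$ are in fact more careful than the paper, which writes the bound with signed entries and implicitly treats the cycle matrix as an indicator; your proposed fix (orienting candidate-line entries consistently or taking absolute values) is the right way to make that rigorous.
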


\begin{proof}
	Let $c$ be a candidate cycle.
	If not all candidate lines of the candidate cycle are build, the corresponding cycle constraint must be inactive in all circumstances. In the case where $n$ lines are not built equation \eqref{eq:cycle-tep-kvl} becomes
	\begin{equation}
		- n M_c^\text{KVL} \leq \sum_{\ell\in\cLo\;\cup\;\cLi} C_{\ell c}^1 x_\ell f_\ell \leq n M_c^\text{KVL}.
		\label{eq:ccreduced}
	\end{equation}
	Moreover, through equation \eqref{eq:existinglinelimit} the flow $f_\ell$ in lines $\ell \in \cLo\;\cup\;\cLi$ is symmetrically limited by their nominal capacity $F_\ell$. Hence,
	\begin{equation}
		\sum_{\ell\in\cLo\;\cup\;\cLi} C_{\ell c}^1 x_\ell f_\ell \leq \sum_{\ell\in\cLo\;\cup\;\cLi} C_{\ell c}^1 x_\ell F_\ell.
		\label{eq:cclimit}
	\end{equation}
	Constraint \eqref{eq:ccreduced} must be inactive even if an investment decision for only one candidate line is missing to close the candidate cycle. Therefore with $n=1$ and the upper limit given in equation \eqref{eq:cclimit}, one obtains
	equation \eqref{eq:cycle-bigm}.
\end{proof}

Since there are no voltage angle variables and therefore no slack constraints in the cycle-based
formulation, there is no need to calculate such big-$M$ parameters.
For calculating the voltage angles as outlined in Section \ref{sec:postfacto},
the slack buses can simply be chosen based on the resulting synchronous zones
after the optimal investment decisions are known.
This has the advantage over the angle-based formulation that matters of synchronization
do not have to be encoded into the optimization problem.

\section{Experimental Setup}
\label{sec:setup}

We benchmark the presented transmission expansion planning formulations on multiple networks
using the open European power transmission system model PyPSA-Eur \cite{pypsaeur} as a basis, which
includes regionally resolved time series for electricity demand and renewable generator availability.
The evaluation criteria are computational speed and peak memory consumption.
The benchmark problems consider simultaneous generation and transmission
capacity expansion each given a carbon budget of 40 Mt$_{\text{CO}_2}$, following the
description of the long-term investment planning problem outlined in equation \eqref{eq:objective}.
Considered generation technologies include solar photovoltaics, onshore and offshore wind generators
as well as open-cycle gas turbines (OCGT) and run-of-river power plants,
but no storage units to maintain the independence of hourly snapshots and
focus on transmission expansion as balancing option for renewables.
For candidate lines we assume a standard line type for transmission lines
at 380 kV with a capacity of approximately 1.7 GW \cite{pypsa}.
Full model details and underlying assumptions are provided via the links provided in the appendix and in \cite{pypsaeur}.

To obtain a comprehensive sample of network topologies and operating conditions, we vary
the number of clustered nodes in Europe $\{1000,750,500,250\}$,
the number of selected hours from a whole year $\{1,5,25,50,75\}$,
the regional extract (see colored areas in Figure \ref{fig:geography}),
the tolerated MIP optimality gap $\{0.5\%, 1\%\}$, and
the number of candidate lines per existing HVAC and HVDC corridor $\{1,2\}$.
In total, we evaluated $672$ test problems.

The repository to reproduce the benchmarks is referenced in the appendix.
All formulations have been implemented for the power system analysis toolbox PyPSA \cite{pypsa}
and the optimization problems are solved using the commercial solver Gurobi (version 9.0), given a time limit of $6$ hours each.
Primal simplex, dual simplex and interior point algorithms are run in parallel for each problem.
The solutions and solving times are retrieved from the fastest algorithm.

\begin{figure}
	(a) Europe: 1000 nodes \hspace{1.4cm} (b) Europe: 750 nodes
	\centering
	\includegraphics[width=0.48\columnwidth]{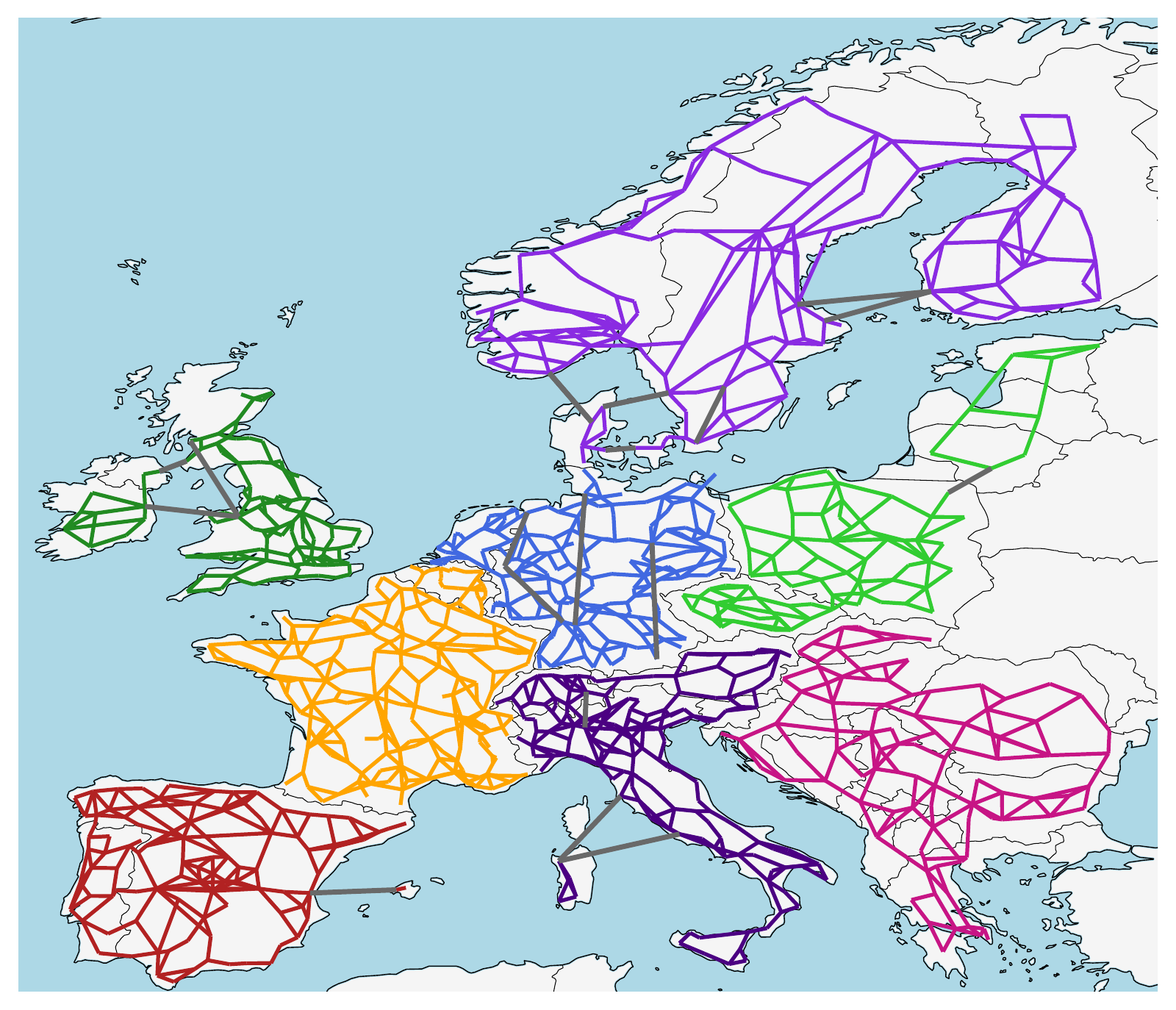}
	\includegraphics[width=0.48\columnwidth]{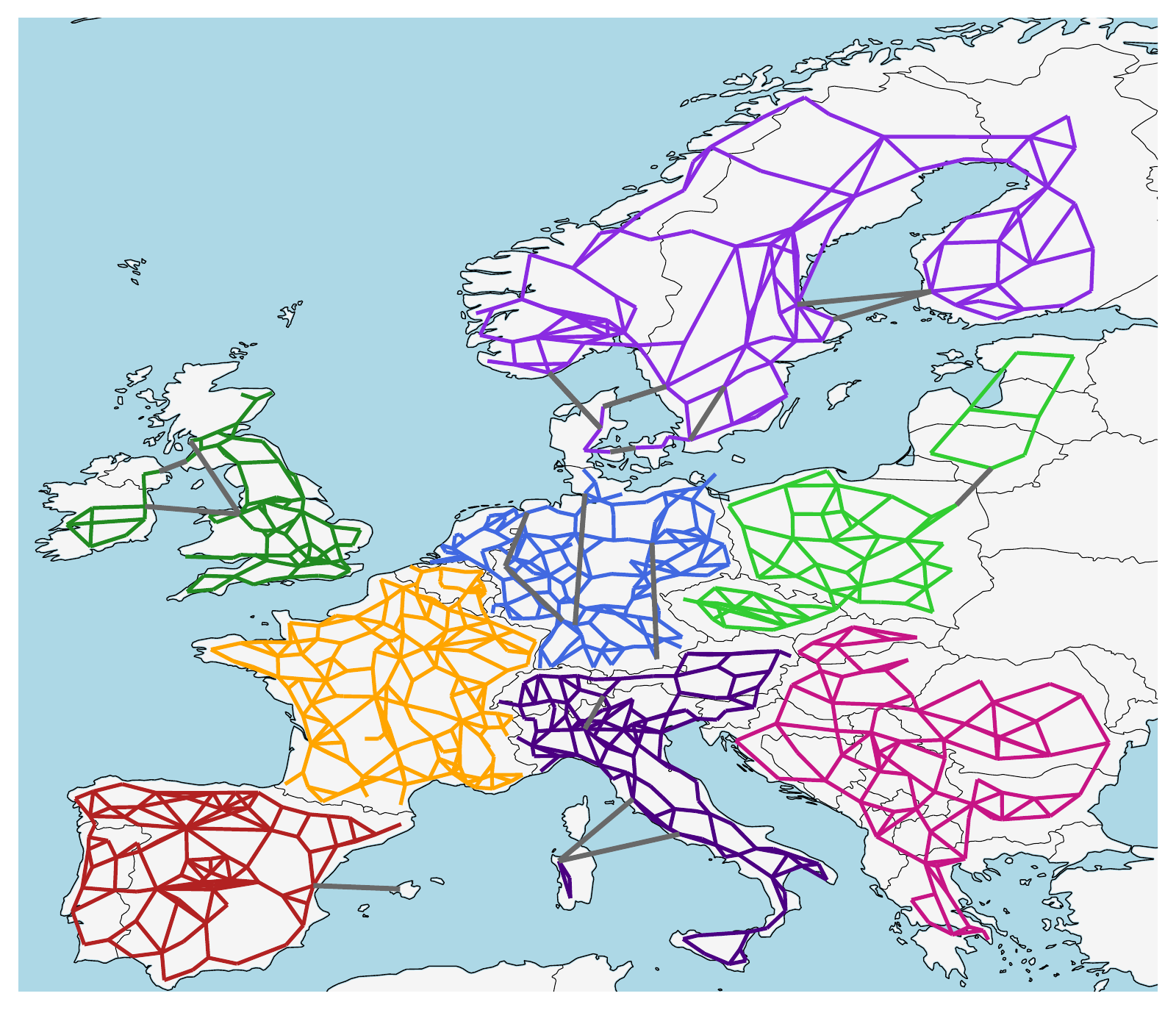}
	(c) Europe: 500 nodes \hspace{1.5cm} (d) Europe: 250 nodes
	\includegraphics[width=0.48\columnwidth]{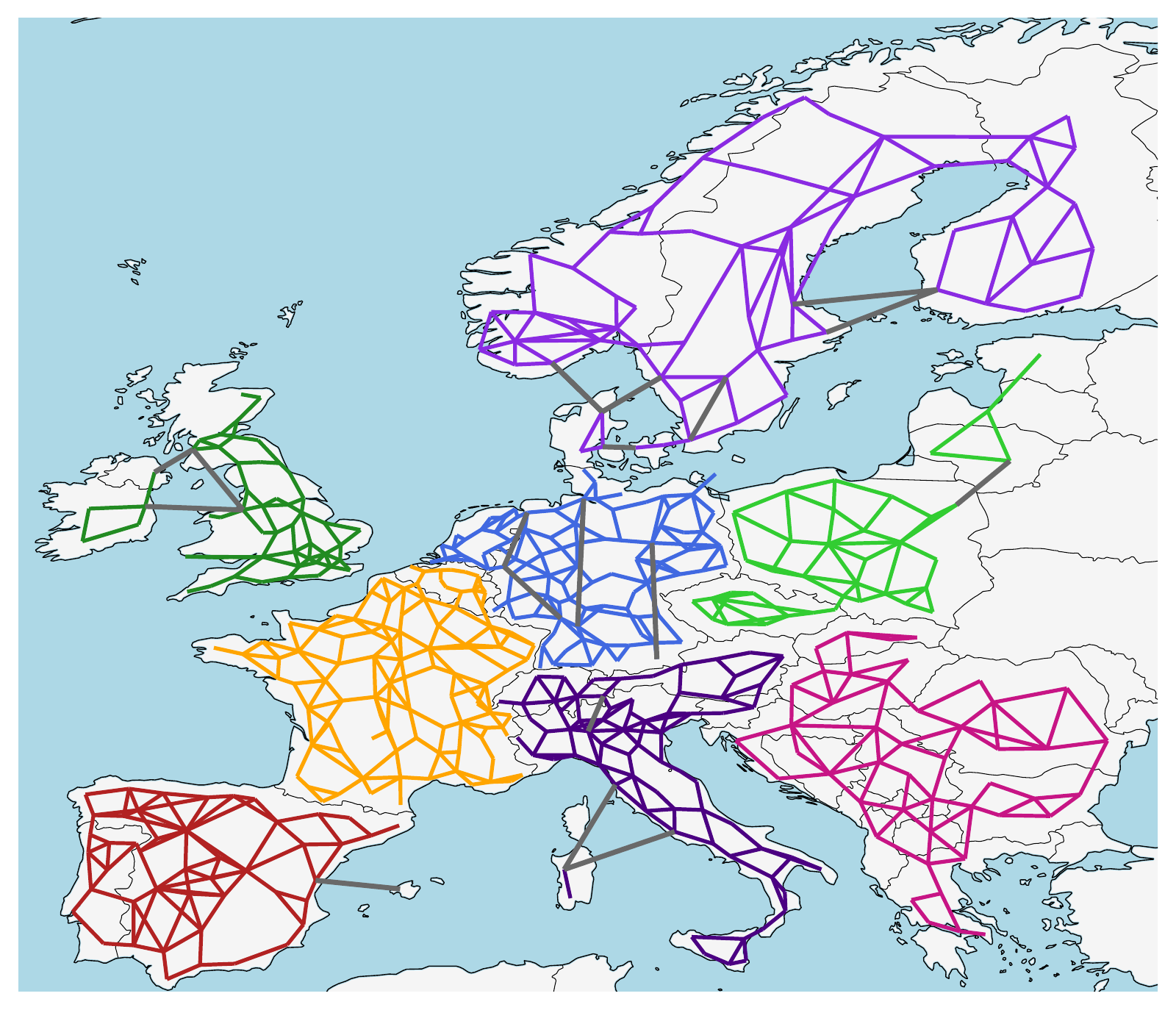}
	\includegraphics[width=0.48\columnwidth]{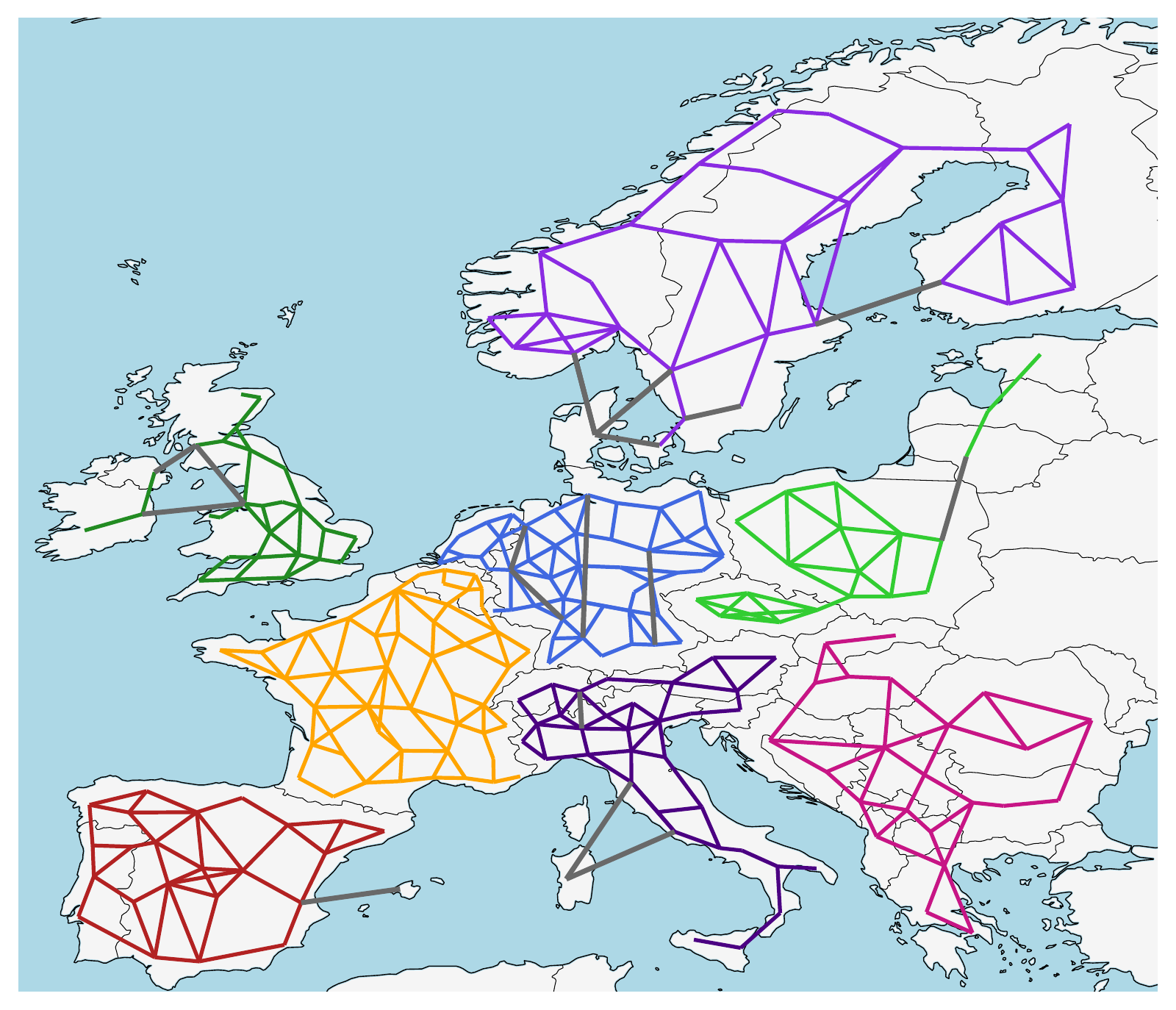}
	\caption{Clustered European transmission network models from which regional extracts are formed for the benchmark cases.
		Each color denotes a region.
		Colored lines represent AC transmission lines at 380 kV, gray lines represent HVDC links.}
	\label{fig:geography}
\end{figure}

\section{Results}
\label{sec:results}

To begin with, Figure \ref{fig:histograms} provides an initial insight on the problem sizes of the benchmark cases.
The benchmark set covers a wide range of many smaller and some more complex problems.
The largest involve up to 150,000 variables and 300,000 constraints and are the main target of speed improvements.
The number of binary investment variables ranges from 34 to 612 candidate lines.

\begin{figure}
	\includegraphics[width=\columnwidth]{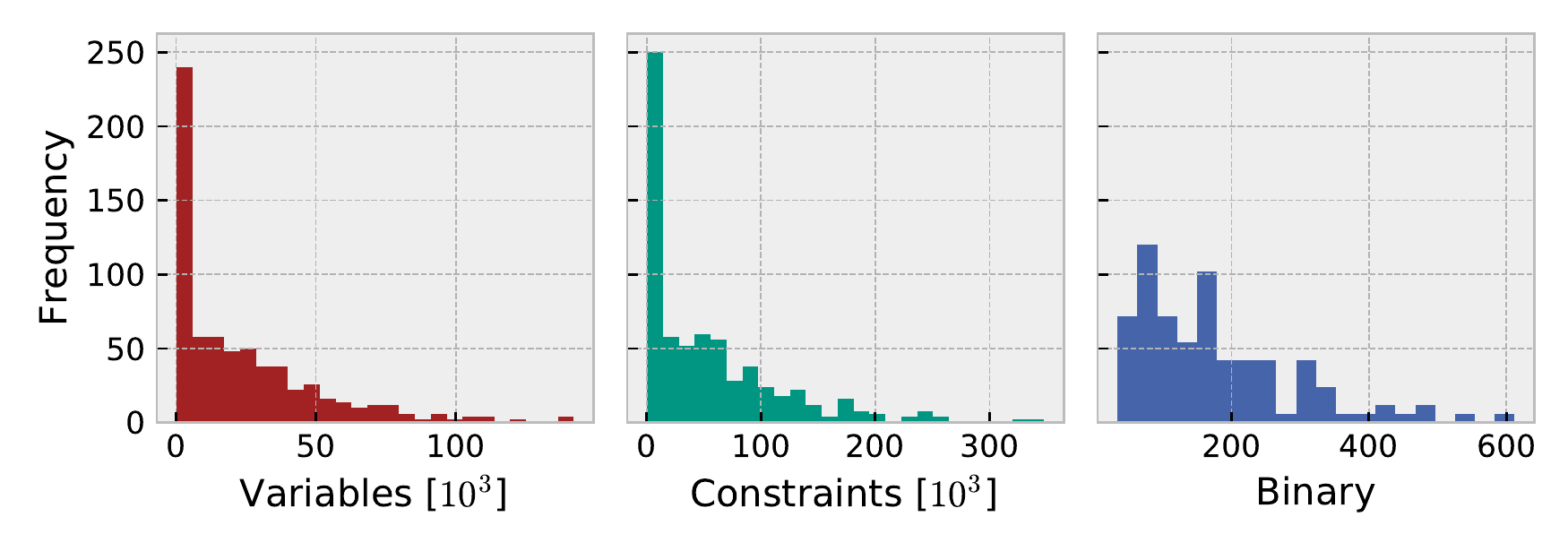}
	\caption{Histograms of the distribution of the number of variables, constraints and binary variables across the benchmark cases for the angle-based joint generation and transmission expansion planning.}
	\label{fig:histograms}
\end{figure}

On average, using the cycle-based formulation reduces the number of constraints to $95.3\%$
compared to the angle-based formulation. Likewise, the average number of variables is reduced to $90.5\%$.
As previously noted, this is due to the absence of the auxiliary voltage angle variables and fewer KVL constraints
in the cycle-based formulation.

A share of 92\% of all cycle-based problems and 82\% of all angle-based problems
were solved fastest using interior-point algorithms.
Otherwise, dual simplex was quickest.
To verify the formulations' objective values, while accounting for the fact that the
MILPs only solve up to a predefined tolerance,
we assert that the upper bound of one formulation is always larger than the lower bound of the other.
Across all instances the total volume of transmission expansion ranges between $0\%$ and $60\%$ of the existing
transmission network with up to $24$ TWkm of additional network capacity.
Due to the tolerances regarding the MIP gaps, both formulations can still yield slightly different
transmission expansion plans.

\begin{table}
	\begin{tabular}{@{}lrrr@{}}
		\toprule
		             & both         & at least one $>2$ min, & one unsolved \\
		             & $\leq 2$ min & excluding unsolved     & in walltime  \\
		\midrule
		instances    & 400          & 186                    & 26           \\
		share faster & 63.8\%       & 87.6\%                 & 100\%        \\
		\midrule
		\multicolumn{4}{l}{speed-up factor (angle-based / cycle-based):}    \\
		\midrule
		-- mean      & 1.28         & 3.94                   & 3.12         \\
		-- median    & 1.09         & 2.20                   & 2.11         \\
		-- maximum   & 9.40         & 31.25                  & 14.70        \\
		-- minimum   & 0.67         & 0.38                   & 1.06         \\
		\bottomrule
	\end{tabular}
	\caption{Numerical results for comparing the novel cycle-based to the standard angle-based formulation.
		The speed-up factor is calculated by dividing the solving time of the angle-based formulation
		by the solving times of the cycle-based formulation.}
	\label{tab:results}
\end{table}

In terms of computation times, the competing formulations are contrasted in Figure \ref{fig:times}
and Table \ref{tab:results}.  For individual benchmark cases
the relation between solving times is visible in Figure \ref{fig:times}.
If a point is located on the identity line, both angle-based and cycle-based formulation took the same period of time
to solve. If a point lies in the upper-left triangle the cycle-based formulation solved faster,
while a point in the lower-right triangle indicates that the angle-based formulation was quicker.
Some instances have slightly exceeded the walltime of $6$ hours due to system latency.

From Figure \ref{fig:times} it becomes clear that the cycle-based formulation has a distinct advantage
over the angle-based formulation in terms of computation times.
In $60$ out of the $672$ instances both formulations did not satisfy the required MIP gap within the time limit.
For the remaining instances we distinguish the cases
(i) both formulations solved in less than two minutes,
(ii) at least one formulation took more than two minutes but neither ran into the walltime, and
(iii) exactly one formulation did not solve within the time limit.
These categories are reflected in the summary of computational performance in Table \ref{tab:results}.

Instances of the most relevant group (ii), solve up to $31.25$ times faster for particular cases,
while averaging at a speed-up of factor $3.94$ when using the cycle-based formulation instead of the angle-based variant.
The median speed-up is $2.20$.
The angle-based formulation is outperformed in most (but not all) cases.
Only in $12.4\%$ of all cases, the angle based formulation was faster.
For 26 instances, one formulation could not satisfy optimality tolerances within the time limit of 6 hours.
In all such cases, the cycle-based formulation was solved, taking on average just 2 hours.
The reduced computational advantage for small problems can partially be explained by
the overhead that originates from determining the cycle basis and candidate cycles when building the problem.
\begin{figure}
	\centering
	\includegraphics[width=\columnwidth]{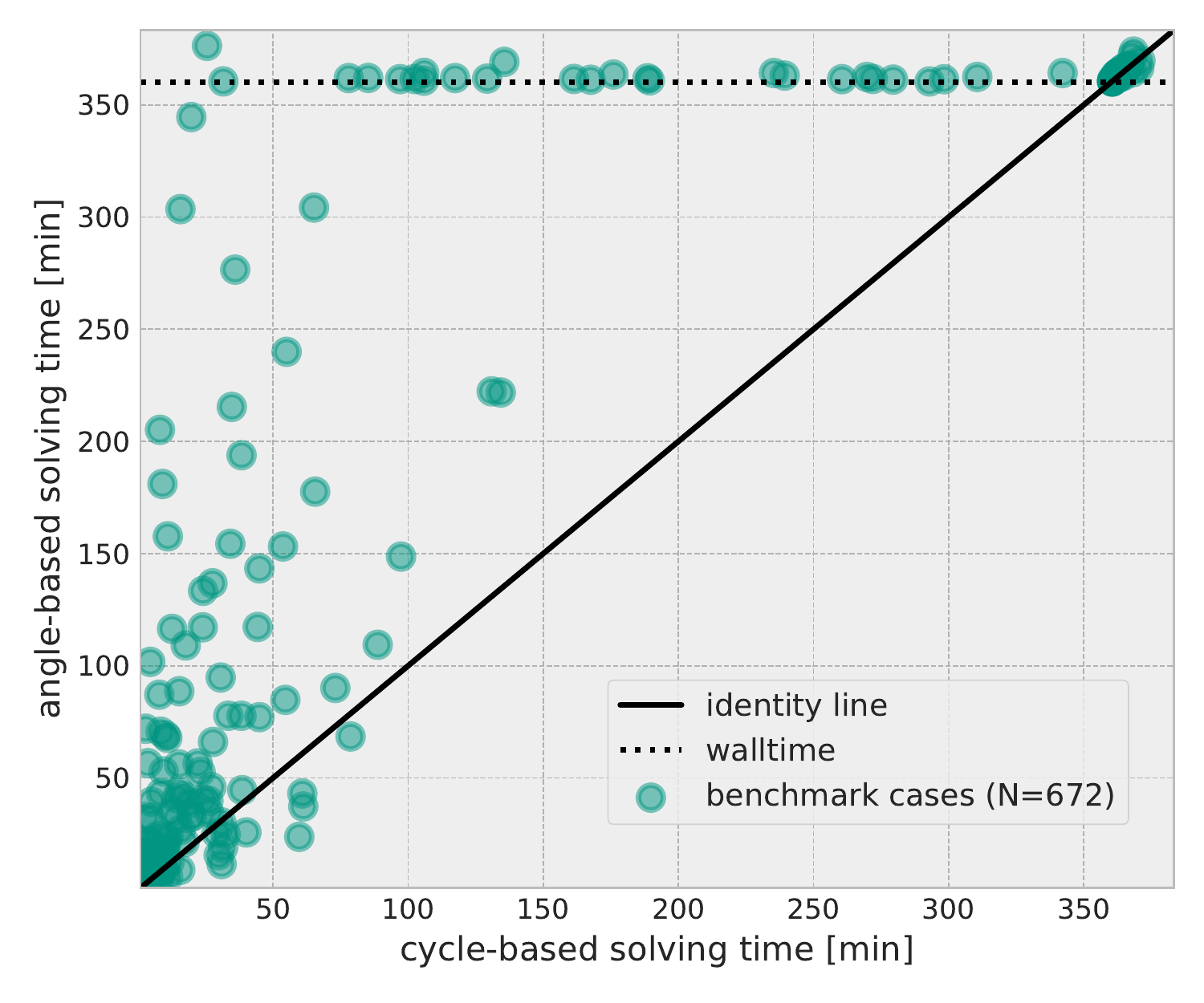}
	\caption{Solving times of cycle-based formulation versus solving times of angle-based formulation.}
	\label{fig:times}
\end{figure}

\begin{figure}
	\centering
	\includegraphics[width=\columnwidth]{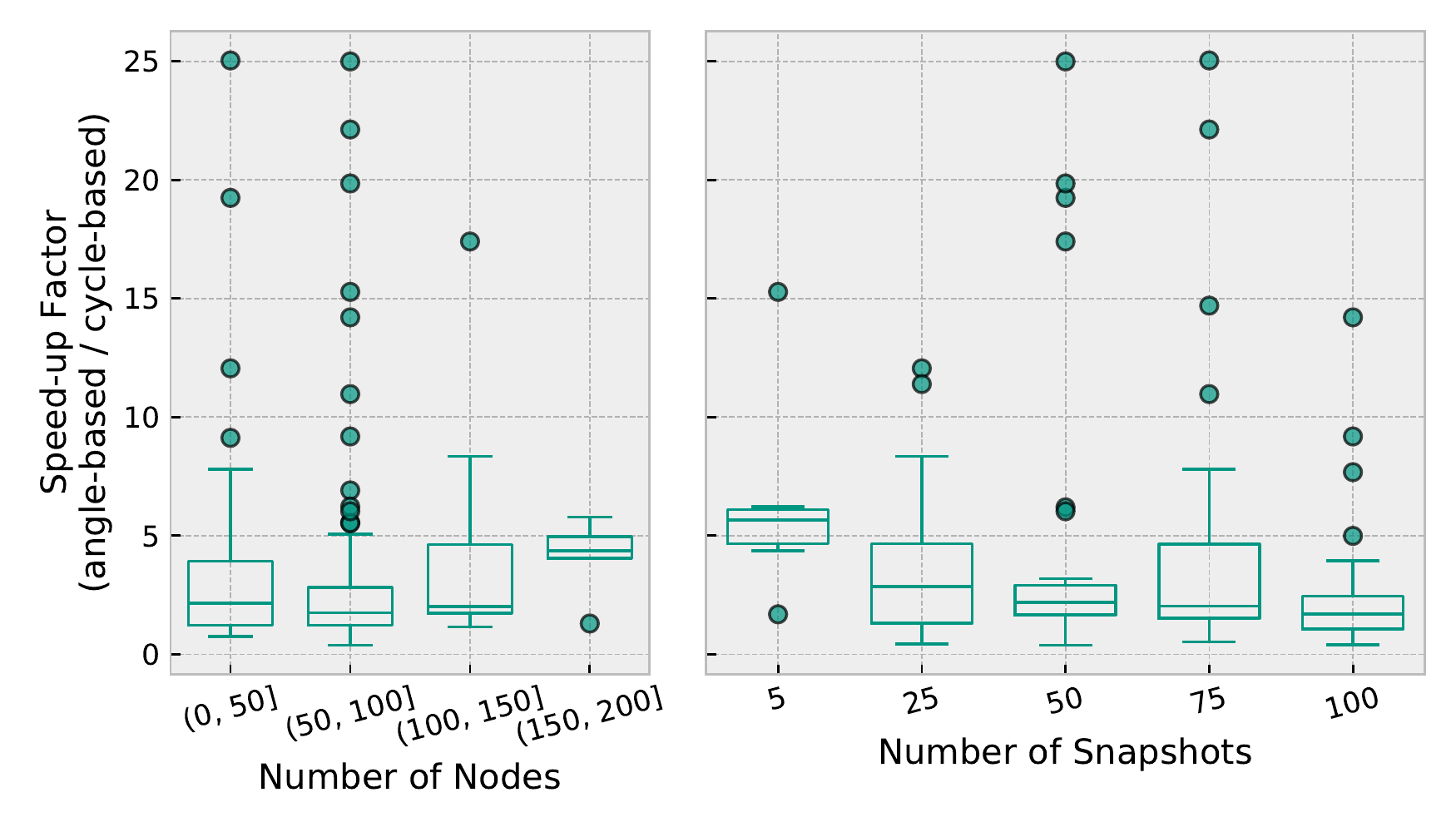}
	\caption{Sensitivities of speed-up factor distribution towards number of nodes and number of snapshots depicted as boxplots.}
	\label{fig:boxplots}
\end{figure}

\begin{figure}
	\centering
	(a) by required MIP Gap
	\includegraphics[width=\columnwidth]{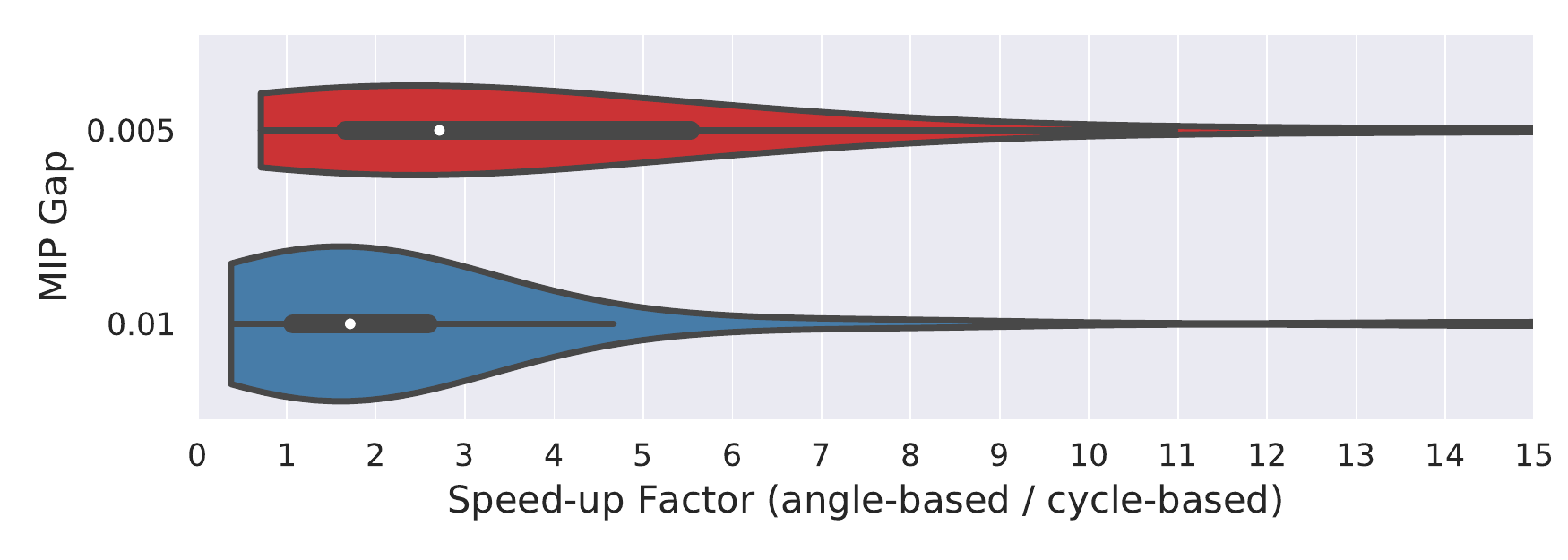}
	(b) by number of candidate lines per existing corridor
	\includegraphics[width=\columnwidth]{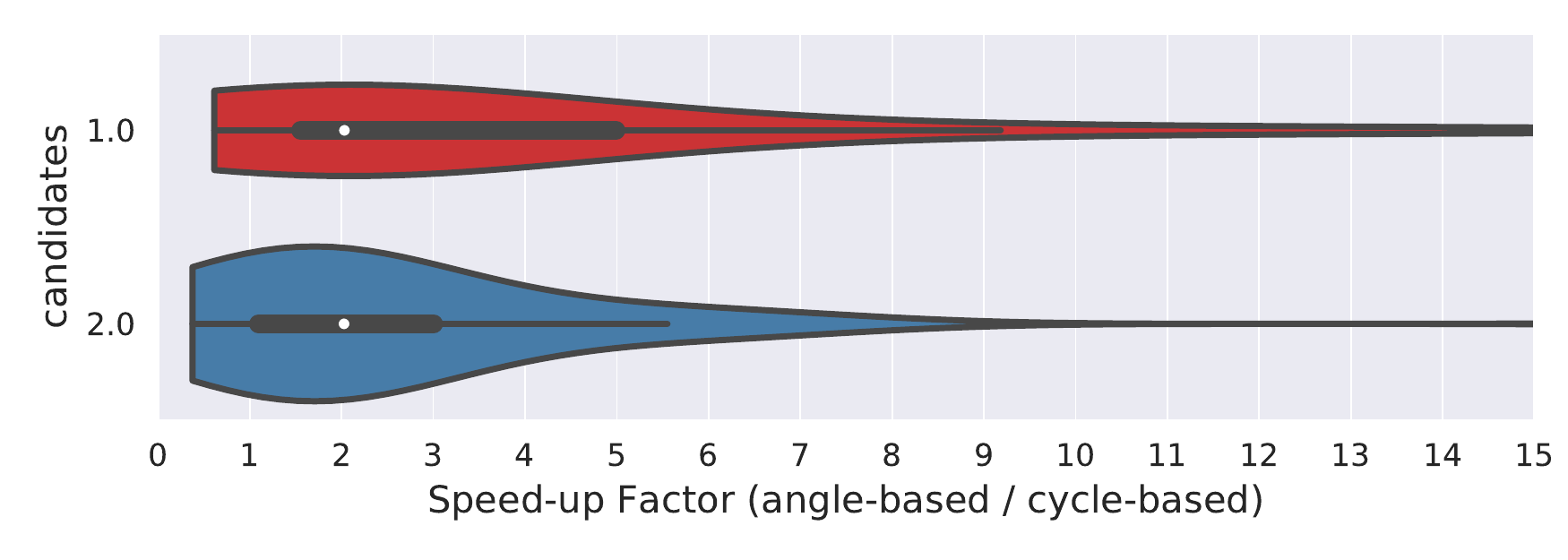}
	\caption{Sensitivities of speed-up factor distribution}
	\label{fig:violin}
\end{figure}

\begin{figure}
	\centering
	\includegraphics[width=\columnwidth]{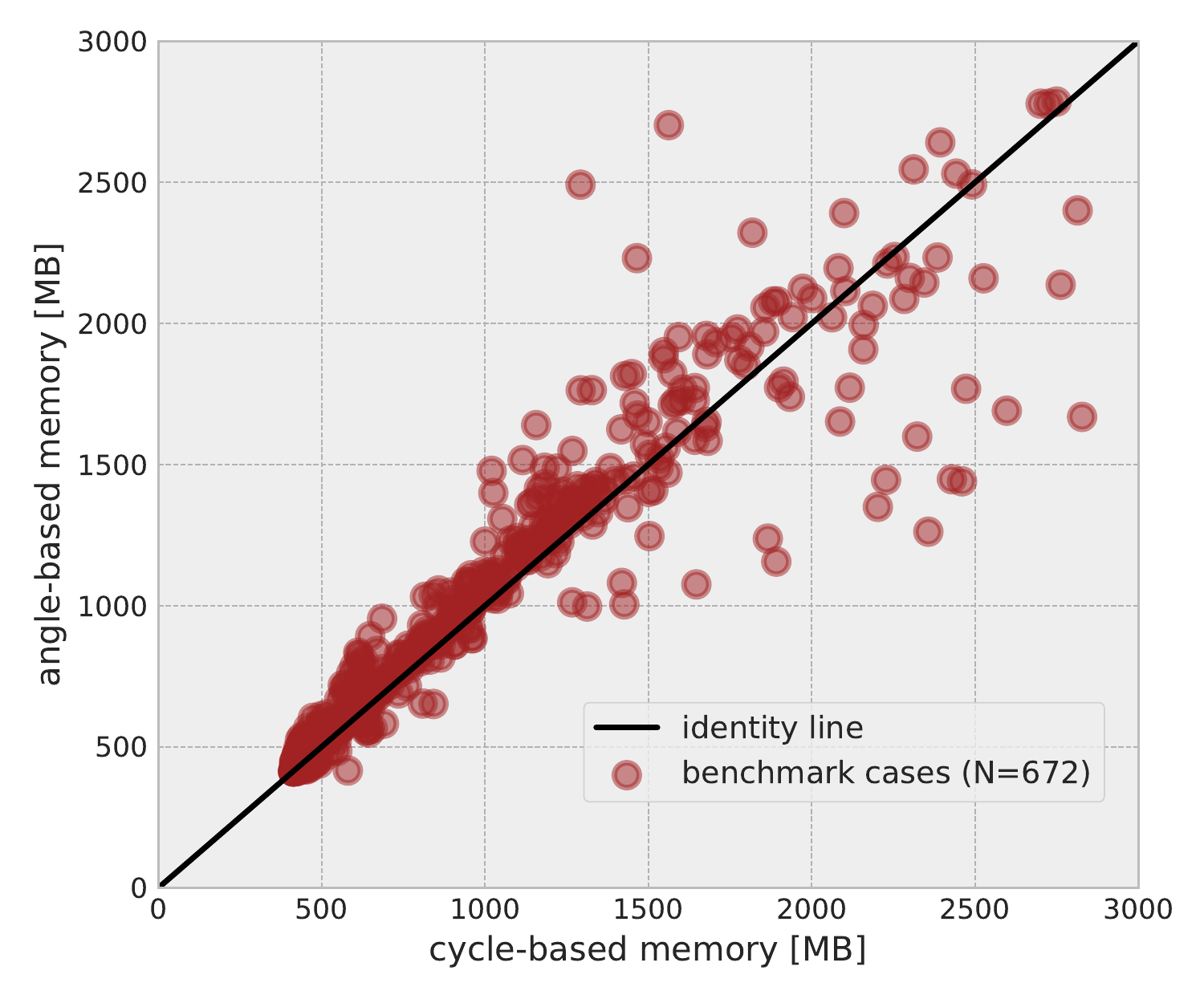}
	\caption{Memory consumption of cycle-based formulation versus solving times of angle-based formulation.}
	\label{fig:memory}
\end{figure}

The boxplots in Figure \ref{fig:boxplots} afford another interesting view on the solving times.
They show the sensitivities of speed-up factors towards the spatial and temporal resolution
of the network models.
Besides many outliers, a trend towards a higher acceleration with larger networks is visible.
Although acceleration tends to decrease to some extent with higher temporal resolution,
the cycle-based formulation still outperforms the angle-based formulation considerably in most cases.
Figure \ref{fig:violin} exhibits two further sensitivities.
We found that a tighter MIP optimality gap further develops the advantage of the cycle-based formulation,
while considering slightly more candidate lines for each existing corridor tends to reduce its benefit.
Contrary to computation times, as is shown in Figure \ref{fig:memory}, there is no clear preference for
either formulation in terms of peak memory consumption.

The fact that already for problems with few snapshots considerable speed-ups could be achieved,
makes the cycle-based reformulation also suitable for combining it with Benders decomposition or related decomposition schemes.
We did not apply any decomposition in this paper because the merits of Benders decomposition may be restricted to TEP problems
where there are no complicating time-dependent constraints, e.g.\ from storage consistency equations or carbon budgets.
Such intertemporal coupling would prohibit other essential acceleration techniques \cite{lumbreras_faster},
but is pivotal to factor in the multitude of tradeoffs in designing highly integrated renewable energy systems by co-optimization.

\section{Conclusion}
\label{sec:conclusion}

This paper developed a novel cycle-based reformulation for the transmission expansion planning (TEP)
problem with LOPF and compared it to the standard angle-based formulation.
Instead of introducing a large number of auxiliary voltage angle variables,
the cycle-based formulation expresses Kirchhoff's voltage law directly in terms of the power
flows, based on a cycle decomposition of the network graph. This results in fewer variables and sparser constraints.
The angle-based formulation, moreover, has the disadvantage that it is not well-suited to considering the connection of multiple
disconnected networks. The cycle-based formulation is shown to conveniently accommodate such synchronization options.
Since both formulations use the big-$M$ disjunctive relaxation, helpful derivations
for suitable big-$M$ values were provided to avert numerical problems.
The competing formulations were benchmarked on $672$ realistic generation and transmission expansion problems
built from an open model of the European transmission system.
For computationally challenging problems, the cycle-based formulation was shown to solve up to 31 times
faster for particular cases, while averaging at a speed-up of factor 4.
Hence, the cycle-based formulation is convincing not only because it can efficiently address synchronization options,
but also for its computational performance.

\begin{acks}
 F.N. and T.B. gratefully acknowledge funding from the Helmholtz Association
 under grant no. VH-NG-1352. F.N. also gratefully acknowledges funding from
 the Karlsruhe House of Young Scientists (KHYS) through the networking grant programme.
 The responsibility for the contents lies with the authors.
\end{acks}

\bibliographystyle{ACM-Reference-Format}
\bibliography{cycleflows}


\begin{thebibliography}{25}


\ifx \showCODEN    \undefined \def \showCODEN     #1{\unskip}     \fi
\ifx \showDOI      \undefined \def \showDOI       #1{#1}\fi
\ifx \showISBNx    \undefined \def \showISBNx     #1{\unskip}     \fi
\ifx \showISBNxiii \undefined \def \showISBNxiii  #1{\unskip}     \fi
\ifx \showISSN     \undefined \def \showISSN      #1{\unskip}     \fi
\ifx \showLCCN     \undefined \def \showLCCN      #1{\unskip}     \fi
\ifx \shownote     \undefined \def \shownote      #1{#1}          \fi
\ifx \showarticletitle \undefined \def \showarticletitle #1{#1}   \fi
\ifx \showURL      \undefined \def \showURL       {\relax}        \fi
\providecommand\bibfield[2]{#2}
\providecommand\bibinfo[2]{#2}
\providecommand\natexlab[1]{#1}
\providecommand\showeprint[2][]{arXiv:#2}

\bibitem[\protect\citeauthoryear{Biggs}{Biggs}{1974}]%
        {biggs_graph_theory}
\bibfield{author}{\bibinfo{person}{Norman Biggs}.}
  \bibinfo{year}{1974}\natexlab{}.
\newblock \bibinfo{booktitle}{\emph{Algebraic {Graph} {Theory}}}.
\newblock \bibinfo{publisher}{Cambridge University Press},
  \bibinfo{address}{Cambridge}.
\newblock
\urldef\tempurl%
\url{https://doi.org/10.1017/CBO9780511608704}
\showDOI{\tempurl}


\bibitem[\protect\citeauthoryear{Binato}{Binato}{2000}]%
        {binato_phd}
\bibfield{author}{\bibinfo{person}{Silvio Binato}.}
  \bibinfo{year}{2000}\natexlab{}.
\newblock \emph{\bibinfo{title}{Expansão Ótima de {Sistemas} de
  {Transmissão} {Através} de {Decomposição} de {Binders} e {Técnicas} de
  {Planos} {Cortantes}}}.
\newblock Doctoral. \bibinfo{address}{Rio de Janeiro}.
\newblock
\urldef\tempurl%
\url{https://www.cos.ufrj.br/uploadfile/publicacao/831.pdf}
\showURL{%
\tempurl}


\bibitem[\protect\citeauthoryear{Binato, Pereira, and Granville}{Binato
  et~al\mbox{.}}{2001}]%
        {binato_benders}
\bibfield{author}{\bibinfo{person}{Silvio Binato}, \bibinfo{person}{Mario
  Pereira}, {and} \bibinfo{person}{Sergio Granville}.}
  \bibinfo{year}{2001}\natexlab{}.
\newblock \showarticletitle{A new {Benders} decomposition approach to solve
  power transmission network design problems}.
\newblock \bibinfo{journal}{\emph{IEEE Transactions on Power Systems}}
  \bibinfo{volume}{16}, \bibinfo{number}{2} (\bibinfo{year}{2001}),
  \bibinfo{pages}{235--240}.
\newblock
\urldef\tempurl%
\url{https://doi.org/10.1109/59.918292}
\showDOI{\tempurl}


\bibitem[\protect\citeauthoryear{Bollobás}{Bollobás}{1998}]%
        {bollobas_graph_theory}
\bibfield{author}{\bibinfo{person}{Béla Bollobás}.}
  \bibinfo{year}{1998}\natexlab{}.
\newblock \bibinfo{booktitle}{\emph{Modern graph theory}
  (\bibinfo{edition}{corrected} ed.)}.
\newblock Number 184 in \bibinfo{series}{Graduate texts in mathematics}.
  \bibinfo{publisher}{Springer}, \bibinfo{address}{Heidelberg}.
\newblock
\showISBNx{978-0-387-98488-9}
\urldef\tempurl%
\url{https://doi.org/10.1007/978-1-4612-0619-4}
\showDOI{\tempurl}


\bibitem[\protect\citeauthoryear{Brown, Hörsch, and Schlachtberger}{Brown
  et~al\mbox{.}}{2018}]%
        {pypsa}
\bibfield{author}{\bibinfo{person}{Tom Brown}, \bibinfo{person}{Jonas Hörsch},
  {and} \bibinfo{person}{David Schlachtberger}.}
  \bibinfo{year}{2018}\natexlab{}.
\newblock \showarticletitle{{PyPSA}: {Python} for {Power} {System} {Analysis}}.
\newblock \bibinfo{journal}{\emph{Journal of Open Research Software}}
  \bibinfo{volume}{6} (\bibinfo{year}{2018}), \bibinfo{pages}{4}.
\newblock
\urldef\tempurl%
\url{https://doi.org/10.5334/jors.188}
\showDOI{\tempurl}


\bibitem[\protect\citeauthoryear{Carvalho, Soares, and Ohishi}{Carvalho
  et~al\mbox{.}}{1988}]%
        {carvalho_cycleflows}
\bibfield{author}{\bibinfo{person}{M.F. Carvalho}, \bibinfo{person}{S. Soares},
  {and} \bibinfo{person}{T. Ohishi}.} \bibinfo{year}{1988}\natexlab{}.
\newblock \showarticletitle{Optimal active power dispatch by network flow
  approach}.
\newblock \bibinfo{journal}{\emph{IEEE Transactions on Power Systems}}
  \bibinfo{volume}{3}, \bibinfo{number}{4} (\bibinfo{year}{1988}),
  \bibinfo{pages}{1640--1647}.
\newblock
\urldef\tempurl%
\url{https://doi.org/10.1109/59.192975}
\showDOI{\tempurl}


\bibitem[\protect\citeauthoryear{Grainer and Stevenson}{Grainer and
  Stevenson}{1994}]%
        {grainer}
\bibfield{author}{\bibinfo{person}{J.~John Grainer} {and}
  \bibinfo{person}{William Stevenson}.} \bibinfo{year}{1994}\natexlab{}.
\newblock \bibinfo{booktitle}{\emph{Power {System} {Analysis}}}.
\newblock \bibinfo{publisher}{McGraw-Hill Education}.
\newblock


\bibitem[\protect\citeauthoryear{Hörsch and Brown}{Hörsch and Brown}{2017}]%
        {spatial_scale}
\bibfield{author}{\bibinfo{person}{Jonas Hörsch} {and} \bibinfo{person}{Tom
  Brown}.} \bibinfo{year}{2017}\natexlab{}.
\newblock \showarticletitle{The role of spatial scale in joint optimisations of
  generation and transmission for {European} highly renewable scenarios}.
\newblock \bibinfo{journal}{\emph{14th International Conference on the European
  Energy Market}} (\bibinfo{year}{2017}), \bibinfo{pages}{1--8}.
\newblock
\urldef\tempurl%
\url{https://arxiv.org/abs/1705.07617}
\showURL{%
\tempurl}


\bibitem[\protect\citeauthoryear{Hörsch, Hofmann, Schlachtberger, and
  Brown}{Hörsch et~al\mbox{.}}{2018a}]%
        {pypsaeur}
\bibfield{author}{\bibinfo{person}{Jonas Hörsch}, \bibinfo{person}{Fabian
  Hofmann}, \bibinfo{person}{David Schlachtberger}, {and} \bibinfo{person}{Tom
  Brown}.} \bibinfo{year}{2018}\natexlab{a}.
\newblock \showarticletitle{{PyPSA}-{Eur}: {An} open optimisation model of the
  {European} transmission system}.
\newblock \bibinfo{journal}{\emph{Energy Strategy Reviews}}
  \bibinfo{volume}{22} (\bibinfo{year}{2018}), \bibinfo{pages}{207--215}.
\newblock
\urldef\tempurl%
\url{https://doi.org/10.1016/j.esr.2018.08.012}
\showDOI{\tempurl}
\newblock
\shownote{arXiv: 1806.01613.}


\bibitem[\protect\citeauthoryear{Hörsch, Ronellenfitsch, Witthaut, and
  Brown}{Hörsch et~al\mbox{.}}{2018b}]%
        {cycleflows}
\bibfield{author}{\bibinfo{person}{Jonas Hörsch}, \bibinfo{person}{Henrik
  Ronellenfitsch}, \bibinfo{person}{Dirk Witthaut}, {and} \bibinfo{person}{Tom
  Brown}.} \bibinfo{year}{2018}\natexlab{b}.
\newblock \showarticletitle{Linear optimal power flow using cycle flows}.
\newblock \bibinfo{journal}{\emph{Electric Power Systems Research}}
  \bibinfo{volume}{158} (\bibinfo{year}{2018}), \bibinfo{pages}{126--135}.
\newblock
\urldef\tempurl%
\url{https://doi.org/10/gdb8kx}
\showDOI{\tempurl}
\newblock
\shownote{arXiv: 1704.01881.}


\bibitem[\protect\citeauthoryear{Kavitha, Liebchen, Mehlhorn, Michail, Rizzi,
  Ueckerdt, and Zweig}{Kavitha et~al\mbox{.}}{2009}]%
        {kavitha_cycle_bases}
\bibfield{author}{\bibinfo{person}{Telikepalli Kavitha},
  \bibinfo{person}{Christian Liebchen}, \bibinfo{person}{Kurt Mehlhorn},
  \bibinfo{person}{Dimitrios Michail}, \bibinfo{person}{Romeo Rizzi},
  \bibinfo{person}{Torsten Ueckerdt}, {and} \bibinfo{person}{Katharina~A.
  Zweig}.} \bibinfo{year}{2009}\natexlab{}.
\newblock \showarticletitle{Cycle bases in graphs characterization, algorithms,
  complexity, and applications}.
\newblock \bibinfo{journal}{\emph{Computer Science Review}}
  \bibinfo{volume}{3}, \bibinfo{number}{4} (\bibinfo{year}{2009}),
  \bibinfo{pages}{199--243}.
\newblock
\urldef\tempurl%
\url{https://doi.org/10/fc8qzp}
\showDOI{\tempurl}


\bibitem[\protect\citeauthoryear{Kavitha, Mehlhorn, Michail, and
  Paluch}{Kavitha et~al\mbox{.}}{2008}]%
        {minimum_cycle_basis}
\bibfield{author}{\bibinfo{person}{Telikepalli Kavitha}, \bibinfo{person}{Kurt
  Mehlhorn}, \bibinfo{person}{Dimitrios Michail}, {and}
  \bibinfo{person}{Katarzyna~E. Paluch}.} \bibinfo{year}{2008}\natexlab{}.
\newblock \showarticletitle{An O(m²n) Algorithm for Minimum Cycle Basis of
  Graphs}.
\newblock \bibinfo{journal}{\emph{Algorithmica}} \bibinfo{volume}{52},
  \bibinfo{number}{3} (\bibinfo{year}{2008}), \bibinfo{pages}{333--349}.
\newblock
\urldef\tempurl%
\url{https://doi.org/10/c4xt4n}
\showDOI{\tempurl}


\bibitem[\protect\citeauthoryear{Kocuk, Jeon, Dey, Linderoth, Luedtke, and
  Sun}{Kocuk et~al\mbox{.}}{2014}]%
        {transmission_switching}
\bibfield{author}{\bibinfo{person}{Burak Kocuk}, \bibinfo{person}{Hyemin Jeon},
  \bibinfo{person}{Santanu~S. Dey}, \bibinfo{person}{Jeff Linderoth},
  \bibinfo{person}{James Luedtke}, {and} \bibinfo{person}{Andy Sun}.}
  \bibinfo{year}{2014}\natexlab{}.
\newblock \showarticletitle{A {Cycle}-{Based} {Formulation} and {Valid}
  {Inequalities} for {DC} {Power} {Transmission} {Problems} with {Switching}}.
\newblock  \bibinfo{number}{June 2019} (\bibinfo{year}{2014}).
\newblock
\urldef\tempurl%
\url{https://doi.org/10/f8zkv3}
\showDOI{\tempurl}
\newblock
\shownote{arXiv: 1412.6245.}


\bibitem[\protect\citeauthoryear{Krishnan, Ho, Hobbs, Liu, McCalley,
  Shahidehpour, and Zheng}{Krishnan et~al\mbox{.}}{2016}]%
        {gtep}
\bibfield{author}{\bibinfo{person}{Venkat Krishnan}, \bibinfo{person}{Jonathan
  Ho}, \bibinfo{person}{Benjamin~F. Hobbs}, \bibinfo{person}{Andrew~L. Liu},
  \bibinfo{person}{James~D. McCalley}, \bibinfo{person}{Mohammad Shahidehpour},
  {and} \bibinfo{person}{Qipeng~P. Zheng}.} \bibinfo{year}{2016}\natexlab{}.
\newblock \showarticletitle{Co-optimization of electricity transmission and
  generation resources for planning and policy analysis: review of concepts and
  modeling approaches}.
\newblock \bibinfo{journal}{\emph{Energy Systems}} \bibinfo{volume}{7},
  \bibinfo{number}{2} (\bibinfo{year}{2016}), \bibinfo{pages}{297--332}.
\newblock
\urldef\tempurl%
\url{https://doi.org/10.1007/s12667-015-0158-4}
\showDOI{\tempurl}


\bibitem[\protect\citeauthoryear{Lumbreras and Ramos}{Lumbreras and
  Ramos}{2016}]%
        {lumbreras_challenges}
\bibfield{author}{\bibinfo{person}{Sara Lumbreras} {and}
  \bibinfo{person}{Andrés Ramos}.} \bibinfo{year}{2016}\natexlab{}.
\newblock \showarticletitle{The new challenges to transmission expansion
  planning. {Survey} of recent practice and literature review}.
\newblock \bibinfo{journal}{\emph{Electric Power Systems Research}}
  \bibinfo{volume}{134} (\bibinfo{year}{2016}), \bibinfo{pages}{19--29}.
\newblock
\urldef\tempurl%
\url{https://doi.org/10.1016/j.epsr.2015.10.013}
\showDOI{\tempurl}


\bibitem[\protect\citeauthoryear{Lumbreras, Ramos, and
  Banez~Chicharro}{Lumbreras et~al\mbox{.}}{2017}]%
        {lumbreras_realsized}
\bibfield{author}{\bibinfo{person}{Sara Lumbreras}, \bibinfo{person}{Andrés
  Ramos}, {and} \bibinfo{person}{Fernando Banez~Chicharro}.}
  \bibinfo{year}{2017}\natexlab{}.
\newblock \showarticletitle{Optimal transmission network expansion planning in
  real-sized power systems with high renewable penetration}.
\newblock \bibinfo{journal}{\emph{Electric Power Systems Research}}
  \bibinfo{volume}{149} (\bibinfo{year}{2017}), \bibinfo{pages}{76--88}.
\newblock
\urldef\tempurl%
\url{https://doi.org/10.1016/j.epsr.2017.04.020}
\showDOI{\tempurl}


\bibitem[\protect\citeauthoryear{Manik, Timme, and Witthaut}{Manik
  et~al\mbox{.}}{2017}]%
        {manik_cycleflows}
\bibfield{author}{\bibinfo{person}{Debsankha Manik}, \bibinfo{person}{Marc
  Timme}, {and} \bibinfo{person}{Dirk Witthaut}.}
  \bibinfo{year}{2017}\natexlab{}.
\newblock \showarticletitle{Cycle flows and multistability in oscillatory
  networks}.
\newblock \bibinfo{journal}{\emph{Chaos: An Interdisciplinary Journal of
  Nonlinear Science}} \bibinfo{volume}{27}, \bibinfo{number}{8}
  (\bibinfo{year}{2017}), \bibinfo{pages}{083123}.
\newblock
\urldef\tempurl%
\url{https://doi.org/10/gbvj97}
\showDOI{\tempurl}
\newblock
\shownote{arXiv: 1611.09825v2.}


\bibitem[\protect\citeauthoryear{{Molzahn} and {Hiskens}}{{Molzahn} and
  {Hiskens}}{2019}]%
        {molzahn_survey}
\bibfield{author}{\bibinfo{person}{D.~K. {Molzahn}} {and}
  \bibinfo{person}{I.~A. {Hiskens}}.} \bibinfo{year}{2019}\natexlab{}.
\newblock \bibinfo{booktitle}{\emph{A Survey of Relaxations and Approximations
  of the Power Flow Equations}}.
\newblock \bibinfo{publisher}{now}.
\newblock
\urldef\tempurl%
\url{https://doi.org/10.1561/3100000012}
\showDOI{\tempurl}


\bibitem[\protect\citeauthoryear{Ramos and Lumbreras}{Ramos and
  Lumbreras}{2016}]%
        {lumbreras_faster}
\bibfield{author}{\bibinfo{person}{Andrés Ramos} {and} \bibinfo{person}{Sara
  Lumbreras}.} \bibinfo{year}{2016}\natexlab{}.
\newblock \showarticletitle{How to solve the transmission expansion planning
  problem faster: acceleration techniques applied to {Benders}’
  decomposition}.
\newblock \bibinfo{journal}{\emph{IET Generation, Transmission \&
  Distribution}} \bibinfo{volume}{10}, \bibinfo{number}{10}
  (\bibinfo{year}{2016}), \bibinfo{pages}{2351--2359}.
\newblock
\urldef\tempurl%
\url{https://doi.org/10.1049/iet-gtd.2015.1075}
\showDOI{\tempurl}


\bibitem[\protect\citeauthoryear{Romero, Monticelli, Garcia, and
  Haffner}{Romero et~al\mbox{.}}{2002}]%
        {romero}
\bibfield{author}{\bibinfo{person}{Ruben Romero}, \bibinfo{person}{A.
  Monticelli}, \bibinfo{person}{Ariovaldo~V Garcia}, {and}
  \bibinfo{person}{Sérgio Haffner}.} \bibinfo{year}{2002}\natexlab{}.
\newblock \showarticletitle{Test systems and mathematical models for
  transmission network expansion planning}.
\newblock \bibinfo{journal}{\emph{IEE Proceedings - Generation, Transmission
  and Distribution}} \bibinfo{volume}{149}, \bibinfo{number}{1}
  (\bibinfo{year}{2002}), \bibinfo{pages}{27}.
\newblock
\urldef\tempurl%
\url{https://doi.org/10.1049/ip-gtd:20020026}
\showDOI{\tempurl}


\bibitem[\protect\citeauthoryear{Ronellenfitsch, Manik, Hörsch, Brown, and
  Witthaut}{Ronellenfitsch et~al\mbox{.}}{2017a}]%
        {dual_line_outages}
\bibfield{author}{\bibinfo{person}{Henrik Ronellenfitsch},
  \bibinfo{person}{Debsankha Manik}, \bibinfo{person}{Jonas Hörsch},
  \bibinfo{person}{Tom Brown}, {and} \bibinfo{person}{Dirk Witthaut}.}
  \bibinfo{year}{2017}\natexlab{a}.
\newblock \showarticletitle{Dual {Theory} of {Transmission} {Line} {Outages}}.
\newblock \bibinfo{journal}{\emph{IEEE Transactions on Power Systems}}
  \bibinfo{volume}{32}, \bibinfo{number}{5} (\bibinfo{year}{2017}),
  \bibinfo{pages}{4060--4068}.
\newblock
\urldef\tempurl%
\url{https://doi.org/10/gbtnp2}
\showDOI{\tempurl}
\newblock
\shownote{arXiv: 1606.07276v2.}


\bibitem[\protect\citeauthoryear{Ronellenfitsch, Timme, and
  Witthaut}{Ronellenfitsch et~al\mbox{.}}{2017b}]%
        {dual_ptdf}
\bibfield{author}{\bibinfo{person}{Henrik Ronellenfitsch},
  \bibinfo{person}{Marc Timme}, {and} \bibinfo{person}{Dirk Witthaut}.}
  \bibinfo{year}{2017}\natexlab{b}.
\newblock \showarticletitle{A {Dual} {Method} for {Computing} {Power}
  {Transfer} {Distribution} {Factors}}.
\newblock \bibinfo{journal}{\emph{IEEE Transactions on Power Systems}}
  \bibinfo{volume}{32}, \bibinfo{number}{2} (\bibinfo{year}{2017}),
  \bibinfo{pages}{1007--1015}.
\newblock
\urldef\tempurl%
\url{https://doi.org/10/gf5p7j}
\showDOI{\tempurl}


\bibitem[\protect\citeauthoryear{Stevens and Palocsay}{Stevens and
  Palocsay}{2017}]%
        {logical_and}
\bibfield{author}{\bibinfo{person}{Scott~P. Stevens} {and}
  \bibinfo{person}{Susan~W. Palocsay}.} \bibinfo{year}{2017}\natexlab{}.
\newblock \showarticletitle{Teaching {Use} of {Binary} {Variables} in {Integer}
  {Linear} {Programs}: {Formulating} {Logical} {Conditions}}.
\newblock \bibinfo{journal}{\emph{INFORMS Transactions on Education}}
  \bibinfo{volume}{18}, \bibinfo{number}{1} (\bibinfo{year}{2017}),
  \bibinfo{pages}{28--36}.
\newblock
\urldef\tempurl%
\url{https://doi.org/10/ggjkvf}
\showDOI{\tempurl}


\bibitem[\protect\citeauthoryear{Taylor}{Taylor}{2015}]%
        {convex_optimization}
\bibfield{author}{\bibinfo{person}{Joshua~Adam Taylor}.}
  \bibinfo{year}{2015}\natexlab{}.
\newblock \bibinfo{booktitle}{\emph{Convex {Optimization} of {Power}
  {Systems}}}.
\newblock \bibinfo{publisher}{Cambridge University Press}.
\newblock
\showISBNx{978-1-107-07687-7}
\urldef\tempurl%
\url{https://doi.org/10.1017/CBO9781139924672}
\showDOI{\tempurl}


\bibitem[\protect\citeauthoryear{Tsamasphyrou, Renaud, and
  Carpentier}{Tsamasphyrou et~al\mbox{.}}{2000}]%
        {tsamasphyrou}
\bibfield{author}{\bibinfo{person}{Panagiota Tsamasphyrou},
  \bibinfo{person}{Arnaud Renaud}, {and} \bibinfo{person}{Pierre Carpentier}.}
  \bibinfo{year}{2000}\natexlab{}.
\newblock \showarticletitle{Transmission network planning under uncertainty
  with {Benders} decomposition}.
\newblock In \bibinfo{booktitle}{\emph{Optimization}},
  \bibfield{editor}{\bibinfo{person}{G.~Fandel}, \bibinfo{person}{W.~Trockel},
  \bibinfo{person}{Van~Hien Nguyen}, \bibinfo{person}{Jean-Jacques Strodiot},
  {and} \bibinfo{person}{Patricia Tossings}} (Eds.).
  Vol.~\bibinfo{volume}{481}. \bibinfo{publisher}{Springer},
  \bibinfo{address}{Berlin, Heidelberg}, \bibinfo{pages}{457--472}.
\newblock
\urldef\tempurl%
\url{https://doi.org/10.1007/978-3-642-57014-8_30}
\showDOI{\tempurl}


\end{thebibliography}

\appendix

\section{Online Resources}

The code to reproduce the experiments of this paper is available at \url{https://github.com/fneum/benchmark-teplopf}.
The repository also contains the results as raw data.
The implementation of the transmission expansion planning problem in PyPSA can be found at \url{https://github.com/pypsa/pypsa/tree/tep-v2}.
Code and documentation of PyPSA-Eur are provided at \url{https://github.com/pypsa/pypsa-eur} and \url{https://pypsa-eur.readthedocs.io}.

\end{document}